\documentclass[a4paper,UKenglish,cleveref, autoref, thm-restate]{lipics-v2021}

\hideLIPIcs  

\title{Efficient Generation of Expected-Degree Graphs via Edge-Arrivals} 

\author{Gianlorenzo D'Angelo}{Gran Sasso Science Institute, Italy}{gianlorenzo.dangelo@gssi.it}{https://orcid.org/0000-0003-0377-7037}{}

\author{Riccardo Michielan\footnote{Corresponding author. E-mail: riccardo.michielan@gssi.it}}{Gran Sasso Science Institute, Italy}{riccardo.michielan@gssi.it}{https://orcid.org/0000-0003-4642-8507}{}

\authorrunning{G. D'Angelo et al.} 

\Copyright{Gianlorenzo D'Angelo and Riccardo Michielan} 

\ccsdesc[100]{Theory of computation~Graph algorithms analysis}

\keywords{network models, graph generation, linear-time algorithm, alias sampling} 

\category{} 

\relatedversion{} 

\nolinenumbers 

\newcommand{\Prob}{\mathbb{P}}
\newcommand{\E}{\mathbb{E}}

\newcommand{\Poi}{\mathrm{Poisson}}
\usepackage{xspace}
\newcommand{\ER}{Erd\H{o}s--R\'enyi\xspace}

\usepackage{algorithm}
\usepackage{algpseudocode}
\usepackage{subcaption}

\ArticleNo{0}

\begin{document}

\maketitle

\begin{abstract}
We study the efficient generation of random graphs with a prescribed expected degree sequence, focusing on rank-1 inhomogeneous models in which vertices are assigned weights and edges are drawn independently with probabilities proportional to the product of endpoint weights. We adopt a temporal viewpoint, adding edges to the graph one at a time up to a fixed time horizon, and allowing for self-loops or duplicate edges in the first stage. Then, the simple projection of the resulting multigraph recovers exactly the simple Norros--Reittu random graph, whose expected degrees match the prescribed targets under mild conditions. Building on this representation, we develop an exact generator based on \textit{edge-arrivals} for expected-degree random graphs with running time $O(n+m)$, where $m$ is the number of generated edges, and hence proportional to the output size. This removes the typical vertex sorting used by widely-used fast generator algorithms based on \textit{edge-skipping} for rank-1 expected-degree models, which leads to a total running time of $O(n \log n + m)$. In addition, our algorithm is simpler than those in the literature, easy to implement, and very flexible, thus opening up to extensions to directed and temporal random graphs, generalization to higher-order structures, and improvements through parallelization.
\end{abstract}

\section{Introduction}
\label{sec:intro}

Generating large random graphs with controlled degree structure is a core task in network science, algorithm engineering, and simulation, where one often needs fast surrogates of observed networks for benchmarking \cite{drobyshevskiy2019random, fabius2026, schwartz2024}, uncertainty quantification \cite{zweig2016,casiraghi2021configuration}, dynamical experiments \cite{barrat2008,bartlett2012epidemic}, and, more broadly, as a computational laboratory for probing conjectures and searching for counterexamples \cite{roucairol2022refutation}. Several lines of work have therefore focused on scalable graph generation \cite{penschuck2022recent}, from linear-time algorithms for classical random graph models \cite{batagelj2005efficient}, to fast samplers for more realistic network ensembles \cite{miller2011efficient,meyer2022,bringmann2019geometric,blasius2019efficiently,kaminski2021,li2025fast}. 
In this paper, we focus on \emph{expected-degree} models: given a sequence of nonnegative weights $x=(x_i)_{i=1}^n$, encoding node-level propensities inferred from data, the goal is to sample a random graph such that $\E[\deg(i)] \approx x_i$ for all vertices $i \in [n]$.

A widely used and analytically tractable model family that achieves this objective is the class of \emph{rank-1} (or \emph{factorized}) inhomogeneous random graphs, in which edges are drawn independently, and the connection probability depends on the endpoints only through a product of their weights. Standard examples include the Chung--Lu model \cite{chung2002connected}, the generalized random graph \cite{britton2006generating,Hofstad_2016}, and the Poissonian Norros--Reittu construction \cite{norros2006conditionally}; under standard regularity assumptions, these formulations are asymptotically equivalent (see Sections 6.7-6.8 in \cite{Hofstad_2016}), and they lead to the desired expected degree sequence. A related hard-constrained alternative is the configuration model \cite{bollobas1980probabilistic,bender1978asymptotic}, which fixes the degree sequence exactly. In many applications, however, the weights $x$ are more naturally interpreted as intrinsic propensities or fitness parameters. Thus, from a modeling perspective, matching degrees only in expectation may be more robust to the idiosyncrasies of a single observed network realization.

A naive implementation of rank-1 models inspects all $\binom{n}{2}$ pairs of vertices, resulting in quadratic time even when the edge count is much smaller than quadratic. A prominent generative scheme for Chung-Lu graph sampling is an adaptation of the fast \ER generation algorithm \cite{batagelj2005efficient}, due to Miller and Hagberg \cite{miller2011efficient}: after sorting vertices by their weights, it avoids examining most non-edges via geometric jumps (\emph{edge-skipping}), and it achieves an expected running time $O(n+m)$ for the sampling phase. The general principle of skipping long runs of absent edges has also been recently applied to sample generalized random graphs~\cite{li2025fast}. Although the Miller--Hagberg algorithm works in expected time $O(n+m)$, it still requires the weights to be sorted in decreasing order. For arbitrary weights, this preprocessing is comparison-based and therefore requires $\Theta(n \log n)$ time in the worst case; only under special assumptions on the keys, such as bounded integer weights or fixed-length encodings, linear-time procedures such as counting or radix sort can be used instead. 
Alam et al.~\cite{alam2016efficient} and Moreno et al.~\cite{moreno2018scalable} improved the running time of Miller--Hagberg algorithm, but only for the case in which the nodes can be partitioned into sets with identical weights, thus their methods are not suitable for generic target expected degree sequences. Consequently, the overall complexity remains $O(n \log n + m)$, leaving open the problem of achieving truly linear-in-output generation time without any sorting step. Such a problem is especially relevant in the sparse regime $m = \Theta(n)$, which is the natural scale for many real-world networks \cite{newman2018networks, ghavasieh2024diversity, broido2019scale}.

Here we introduce a novel generative paradigm that addresses the above challenge by interpreting an expected-degree random graph as a \emph{snapshot} of an underlying temporal graph process. Specifically, each potential edge $\{i,j\}$ is endowed with a Poisson variable, whose rate is proportional to $x_i x_j$, determining its connectivity within a finite time horizon. Then, we construct the graph following an \textit{event-driven} formulation: the edge $\{i,j\}$ is included in the graph if and only if the edge event occurred at least once, that is, if the corresponding Poisson random variable is not null. In particular, we show that this generative algorithm is not an approximation of the Poissonized Norros--Reittu construction, but an alternative exact generative procedure, thus providing a flexible method for sampling expected-degree random graphs. In contrast to edge-skipping approaches, our edge-arrival method does not require any preliminary sorting of the weights: after $O(n)$ preprocessing, it samples a graph in $O(n+m)$ expected time. At a high level, the procedure repeatedly samples $m$ endpoint pairs from a distribution induced by $x$, implemented via the alias method. It then handles collisions, namely self-loops and repeated edges, through lightweight corrections, depending on whether the target object is the multigraph process itself or its projection to a simple graph. 

More broadly, our temporal graph construction paradigm via edge-arrivals is particularly simple to understand, easy to implement, and quite flexible in practice. Beyond simple undirected graphs, it naturally accommodates directed and temporal variants, and it suggests extensions to higher-order structures such as hypergraphs. Moreover, since new edges can be sampled independently across time windows or event batches, the approach also naturally lends itself to parallel implementation.

\paragraph*{Contributions}
\begin{itemize}
    \item We introduce a temporal multigraph construction and show that it coincides with the Poissonian Norros--Reittu model, yielding an exact generator for rank-1 expected-degree random graphs.
    \item We design an algorithm that, given the weight sequence as input, samples a graph in $O(n+m)$ expected time after $O(n)$ preprocessing, without any sorting step.
    \item We discuss how the same paradigm extends naturally to directed and temporal models, higher-order structures, and parallel implementations.
\end{itemize}

\section{Warm-up: \ER via temporal edge arrivals}
\label{sec:warmup}

We begin with an example that illustrates the temporal viewpoint in the simplest possible setting. Let $V = \{1,\dots,n\}$ and let $p \in (0,1)$. The \ER model is the random graph $G_{n,p}=(V,E)$, where each edge appears independently with probability $p$. The model construction below is based on edge activations, assuming the graph achieves a total amount of edges $m$, as in the original model \cite{erdos1960evolution,bollobas2001random}. While not identical, our temporal interpretation is closely related in spirit to standard dynamical \ER, as well as to recent work on random temporal graphs \cite{roberts2018exceptional,becker2023giant,broutin2024increasing}.\\

Rather than working directly with a simple graph, it is convenient to first define an underlying multigraph. Fix $T>0$, and let $\widetilde{G}_n(T)$ denote a multigraph on vertex set $V$ generated by a Poisson stream of edge activations of rate 1 up to time $T$. More precisely, let 
\begin{equation}
    \widetilde{m}\sim \Poi(T)
\end{equation} 
be the total number of activations observed up to time $T$. Each activation independently selects two endpoints $i,j\in V$ uniformly at random, with replacement, and inserts the corresponding edge $\{i,j\}$. Thus, at this stage, self-loops and repeated edges are allowed.

The associated simple graph $G_n(T) = (V, E(T))$ is obtained by \emph{erasing} the multigraph structure, namely, by removing self-loops and merging repeated edges. In particular, for each unordered pair $\{i,j\}$ with $i\neq j$, the number of activations of edge $\{i,j\}$ is Poisson with mean $2 T/n^2$, and these counts are independent across distinct pairs by Poisson thinning. Therefore,
\begin{equation}
    \Prob\left(\{i,j\}\in E(T)\right)=\Prob\left(\Poi\left(2T/n^2\right) \geq 1 \right) = 1-e^{-2T/n^2}.
\end{equation}
In particular, choosing \begin{equation}
    T = -\log(1-p)n^2/2,
\end{equation} 
one obtains $\Prob\left(\{i,j\}\in E(T)\right)=p$, independently over all pairs $\{i,j\}$, hence showing that the temporal formulation is equivalent to the original \ER random graph:
\begin{equation}
    G_n(T)\sim G_{n,p}.
\end{equation}

\subsection{Event-driven algorithm}

The multigraph construction above naturally suggests a generative algorithm. Conditional on the realized activation budget $\widetilde{m}$, the underlying temporal multigraph is generated by inserting $\widetilde{m}$ edges (events) one at a time, each obtained by sampling two endpoints independently and uniformly at random from $V$. The final simple graph is then obtained by removing self-loops and merging repeated edges.

Before giving a more formal description of the algorithm, we stress that $\widetilde{m}$ is not exactly distributed as the number of edges in $G_{n,p}$, which instead satisfies
\begin{equation}
    m := |E(G_{n,p})| \sim \text{Bin}\left(\binom{n}{2},p\right).
\end{equation}
In particular, when $p$ is close to $1$, the quantity $-\log(1-p)$ diverges, indicating that the activation budget required becomes extremely large. However, in more realistic settings when $p\ll 1$, the first-order expansion $-\log(1-p)=p+O(p^2)$
yields
\[
\E[\widetilde{m}] = \frac{n^2}{2}p + O(n^2p^2) = \E[m](1+o(1)),
\]
So the activation budget matches the edge count of $G_{n,p}$ asymptotically.

The resulting sampling strategy is summarized in Algorithm~\ref{alg:er-simple}. Rather than inspecting all $\binom{n}{2}$ candidate edges, the multigraph procedure reduces to the random choice of $\widetilde{m}$ pairs of vertices from $V$, by storing each non-loop edge in canonical order $(\min\{u,v\},\max\{u,v\})$ via a hash set. This provides a simple prototype of the general paradigm developed later for rank-1 models.

\begin{algorithm}
\caption{Event-driven algorithm for the simple \ER graph}
\label{alg:er-simple}
\begin{algorithmic}[1]
\Require number of vertices $n$, edge probability $p \in (0,1)$
\Ensure simple graph $G_{n,p}$ on vertex set $V=[n]$

\State Compute $T \gets -\log(1-p)\, n^2/2$
\State Sample $K \sim \Poi(T)$
\State Initialize an empty set of edges $E$

\For{$r=1$ to $K$}
    \State Sample $u$ uniformly from $V$
    \State Sample $v$ uniformly from $V$
    \If{$u \neq v$}
        \State Insert $(\min\{u,v\},\max\{u,v\})$ into $E$
    \EndIf
\EndFor

\State \Return $(V,E)$
\end{algorithmic}
\end{algorithm}
Sampling the activation budget $\widetilde{m}$ takes constant time, and each activation can be generated in $O(1)$ time by sampling two endpoints independently and uniformly from $V$. In the present homogeneous setting, this is immediate, whereas in the heterogeneous case considered later, the same role will be played by an alias-based sampler after linear preprocessing. The subsequent erasure step, which removes self-loops and merges repeated edges, can also be implemented in linear time with respect to the size of the generated multigraph using standard hashing-based data structures. Thus, the overall running time is $O(n+\widetilde{m})$,
where the additive $O(n)$ term accounts for the initialization of the vertex set, and in the case when $p\ll 1$, this matches $O(n+m)$ in expectation. 

For \ER graphs, this construction should be viewed mainly as a conceptual prototype rather than as a tout court replacement for the best specialized generators. In fact, exact generators with running time $O(n+m)$ are already known; the standard reference is the edge-skipping method via geometric jumps from Batagelj and Brandes \cite{batagelj2005efficient}, which generates $G_{n,p}$ directly as a simple graph and is used, for instance, in the implementation of \texttt{fast\_gnp\_random\_graph()} in NetworkX. By contrast, our proposed algorithm proceeds via an underlying multigraph plus an erasure projection and it is not intended to improve the asymptotic complexity of homogeneous \ER sampling itself.

Nonetheless, the study of the \ER case highlights the two main ingredients of Algorithm \ref{alg:er-simple}: (i) a temporal edge-arrival process that recovers the target random graph law, and (ii) an output-sensitive generation strategy based on simulating only realized edge events.
The remainder of the paper generalizes this scheme from uniform rates to heterogeneous rates.

\section{Rank-1 inhomogeneous random graphs}
\label{sec:model}

The family of \emph{rank-1} (or \emph{factorized}) inhomogeneous random graphs refers to models in which edges appear independently from each other and the connection probability $p_{ij}$ between two vertices $i$ and $j$ is a function of the product of their weights. This family is the natural baseline when the goal is to realize a prescribed expected degree sequence in an edge-independent random graph. Three standard parameterizations are especially common in the literature. If we write $L_n := \sum_{i=1}^n x_i$, their connection probabilities are
\[
p_{ij}^{\mathrm{NR}}
=
1-\exp\!\left(-\frac{x_i x_j}{L_n}\right),
\qquad
p_{ij}^{\mathrm{CL}}
=
\min\!\left\{\frac{x_i x_j}{L_n},1\right\}, \qquad
p_{ij}^{\mathrm{GRG}}
=
\frac{x_i x_j}{L_n+x_i x_j},
\qquad i\neq j.
\]
Here, NR denotes the Norros--Reittu simple projection \cite{norros2006conditionally}, CL the Chung--Lu model \cite{chung2002connected}, and GRG the generalized random graph \cite{britton2006generating}. Observe that $1-e^{-t}=t+O(t^2)$ and $\frac{t}{1+t}=t+O(t^2)$ for small $t$, so whenever $x_i x_j/L_n$ is small in all CL, NR, and GRG models
\begin{equation}\label{eq:first_order_approx}
    p_{ij}
    =
    \frac{x_i x_j}{L_n}
    +
    O\!\left(\frac{x_i^2x_j^2}{L_n^2}\right).    
\end{equation}

In fact, these models are asymptotically equivalent, as the size $n$ grows large, under certain conditions that can be found in \cite{Hofstad_2016} (Conditions 6.4(a)-(c)). In turn, NR, CL, and GRG induce the same target degrees to first order, matching in expectation the given sequence $x$, and they exhibit the same macroscopic limit behavior under standard regularity assumptions.

This asymptotic equivalence further motivates the relevance of the class of rank-1 models. The CL model is analytically tractable and computationally convenient, whereas the GRG encodes the target degree sequence in the least-biased manner. Instead, the Poissonized NR construction naturally lends itself to a temporal interpretation, and therefore, it is used in our current work. In this section, we provide a formal description of the model together with basic properties of its edge count and expected degrees.

\subsection*{Norros--Reittu construction}
\label{subsec:nr-model}

Let $V=[n]:=\{1,\dots,n\}$ and let $x=(x_i)_{i=1}^n\in\mathbb{R}_+^n$ be a prescribed weight sequence.
We adopt the Norros--Reittu construction \cite{norros2006conditionally} as our reference rank-1 model. The primary object is a Poissonian multigraph, from which a simple graph is obtained by projection.

\begin{definition}[Norros--Reittu model]
\label{def:nr-model}
Given $x \in \mathbb R_+^n$, the \emph{Norros--Reittu multigraph} of $x$ is the random multigraph $\widetilde G_n(x)=(V,\widetilde E)$ defined as follows. For each unordered pair $\{i,j\}$ let
\begin{equation}
    \widetilde E_{ij}\sim \Poi\!\left(\frac{x_i x_j}{L_n}\right), \quad \text{if } i\neq j, \qquad \text{and} \qquad \widetilde E_{ij}\sim \Poi\!\left(\frac{x_i^2}{2L_n}\right), \quad \text{if } i=j,
\end{equation}
independently for all $i \leq j$, and define $\widetilde E = \bigcup_{i \leq j} \widetilde E_{ij}$. Each realization of $\widetilde E_{ij}$ for $i<j$ contributes that many parallel edges between $i$ and $j$, whereas $\widetilde E_{ii}$ counts self-loops at vertex $i$.

The \emph{Norros--Reittu simple graph} of $x$ is the graph $G_n(x)=(V,E)$
obtained by deleting from $\widetilde G_n(x)$ all self-loops and replacing each positive multiplicity between distinct vertices by a single edge, namely, for all $i < j$
\begin{equation}
    \{i,j\}\in E
    \quad\Longleftrightarrow\quad
    \widetilde E_{ij} \ge 1.
\end{equation}
Equivalently, the edges of $G_n(x)$ are independent and satisfy
\[
\mathbb P(\{i,j\}\in E)
=
1-\exp\!\left(-\frac{x_i x_j}{L_n}\right),
\qquad \text{for all }i<j.
\]
\end{definition}

We write $\widetilde m:=|\widetilde E|$ for the total number of edges in the multigraph, counted with multiplicity and including loops, and $m:=|E|$ for the number of edges in the simple projection. For the corresponding degrees, we define $\widetilde D_i:= \sum_{j\neq i}\widetilde E_{ij}+2\widetilde E_{ii}$ and $D_i:=\sum_{j\neq i} 1_{\{\widetilde E_{ij}\ge 1\}}$. 

Typically, some conditions on the weight sequence $x$ are required for rank-1 models. Specifically, these conditions involve control over the first and second moments of the sequence, as in Condition 6.4 in \cite{Hofstad_2016}. However, the basic properties of the NR model used in this work requires weaker assumptions, which are directly implied by the above-mentioned stronger conditions in the literature. To avoid repeating them in each result, we introduce the following terminology.

\begin{definition}[Regularity conditions]\label{def:assumptions}
Consider a sequence $x \in \mathbb{R}_+^n$. We say that:
\begin{enumerate}
    \item $x$ is \textit{non-degenerate}, if
    \begin{equation}
        L_n = \sum_{j=1}^n x_j \to \infty 
        \qquad\text{as } n\to\infty.
    \end{equation}
    \item $x$ is \textit{hub-controlled}, if
    \begin{equation}
        \max_{i \in V} \frac{x_i}{L_n^{1/2}} \to 0 
        \qquad\text{as } n\to\infty.
    \end{equation}
\end{enumerate}
\end{definition}

The non-degeneracy places the model in the asymptotic setting of interest, ruling out the possibility that the total weight mass converges to a finite value. Whereas, the hub-control ensures that edge probabilities are uniformly small, allowing for a first-order approximation of the kind \eqref{eq:first_order_approx}. 

These assumptions are mild in many standard expected-degree scenarios. They are automatically satisfied, for instance, when weights are non-vanishing and uniformly bounded, or, more generally, whenever the empirical weight distribution has sufficiently light tails and no small set of vertices carries a macroscopic fraction of the total mass.\\

The Norros--Reittu construction yields exact formulas for the multigraph edge count and degree variables.
For the simple projection, analogous quantities differ only through the removal of loops and the collapse of multiple edges. The proposition below collects basic properties of the NR multigraph and simple graph that follow directly from the model definition. A detailed proof of these facts can be found in Appendix \ref{app:NR_edge_degree}.

\begin{proposition}\label{prop:NR_edge_degree}
    Let $x \in \mathbb{R}_{+}^n$. The number of edges $\widetilde m$ and the degree $\widetilde D_i$, for each $i \in V$, in the NR multigraph $\widetilde G_n(x)$ satisfy
    \begin{equation} \label{eq:result_multigraph}
            \widetilde m \sim \Poi\left(\frac{L_n}{2}\right) \qquad \text{and} \qquad \E[\widetilde D_i] = x_i.
    \end{equation}
    Moreover, assume that $x$ satisfies the regularity conditions in Definition \ref{def:assumptions}. Then, the number of edges $m$ and the degree $D_i$, for each $i \in V$, in the NR simple graph $G_n(x)$ satisfy
    \begin{equation} \label{eq:result_simplegraph}
        m = \widetilde m (1 + o_p(1)) \qquad \text{and} \qquad \E[D_i] = x_i(1+o(1)).
    \end{equation}
\end{proposition}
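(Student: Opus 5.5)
The multigraph part follows from the independence of the Poisson counts in Definition~\ref{def:nr-model} together with the superposition property of independent Poisson variables. Since $\widetilde m=\sum_{i\le j}\widetilde E_{ij}$ is a sum of independent Poissons, it is Poisson with mean
\[
\sum_{i<j}\frac{x_ix_j}{L_n}+\sum_i\frac{x_i^2}{2L_n}=\frac{1}{2L_n}\Bigl(\sum_{i\neq j}x_ix_j+\sum_i x_i^2\Bigr)=\frac{L_n^2}{2L_n}=\frac{L_n}{2}.
\]
For the degree, linearity of expectation gives
\[
\E[\widetilde D_i]=\sum_{j\neq i}\frac{x_ix_j}{L_n}+2\cdot\frac{x_i^2}{2L_n}=\frac{x_i}{L_n}\sum_j x_j=x_i.
\]
This uses no regularity assumptions.

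For the simple graph, set $t_{ij}:=x_ix_j/L_n$. By hub-control, $\max_{i,j}t_{ij}\le(\max_i x_i/L_n^{1/2})^2=:\varepsilon_n\to0$. For the degree I compare $\E[D_i]=\sum_{j\neq i}(1-e^{-t_{ij}})$ with $x_i$. Using $t-t^2/2\le 1-e^{-t}\le t$, we get $\E[D_i]\le\sum_{j\ne i}t_{ij}\le x_i$. For the lower bound, $\sum_{j\neq i}t_{ij}=x_i-x_i^2/L_n=x_i(1-O(\varepsilon_n))$. The quadratic correction is
\[
\tfrac12\sum_j t_{ij}^2\le\tfrac12\max_j t_{ij}\sum_j t_{ij}\le\tfrac12\varepsilon_n x_i.
\]
Hence $\E[D_i]=x_i(1+O(\varepsilon_n))=x_i(1+o(1))$, uniformly in $i$.

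For the edge count I write $\widetilde m-m=\sum_{i}\widetilde E_{ii}+\sum_{i<j}(\widetilde E_{ij}-1)_+$, a nonnegative quantity. I will show $\E[\widetilde m-m]=o(L_n)$. The self-loop contribution has mean $\sum_i x_i^2/(2L_n)\le\tfrac12\max_i x_i\le\tfrac12\varepsilon_n^{1/2}L_n^{1/2}=o(L_n)$. For the multiple edges, $\E[(\mathrm{Poi}(t)-1)_+]=t-1+e^{-t}\le t^2/2$, so
\[
\sum_{i<j}\E[(\widetilde E_{ij}-1)_+]\le\tfrac12\sum_{i,j}t_{ij}^2\le\tfrac12\varepsilon_n\sum_{i,j}t_{ij}=\tfrac12\varepsilon_n L_n=o(L_n).
\]
Markov's inequality then gives $(\widetilde m-m)/L_n\to0$ in probability.

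To conclude I need $\widetilde m/L_n\to1/2$ in probability. Because $\widetilde m\sim\Poi(L_n/2)$, its variance is $L_n/2=o(L_n^2)$ when $L_n\to\infty$ (non-degeneracy), so Chebyshev gives the concentration. Combining the two limits yields $m/\widetilde m\to1$ in probability, that is, $m=\widetilde m(1+o_p(1))$.

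The main obstacle is this last step, converting an additive $o(L_n)$ error into a multiplicative $o_p(1)$ relative to the random $\widetilde m$. This is exactly where non-degeneracy is needed to keep $\widetilde m$ concentrated and bounded away from zero on the scale $L_n$. Everything else reduces to the bound $t_{ij}\le\varepsilon_n$ from hub-control.
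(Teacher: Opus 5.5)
Your proof is correct and follows essentially the same route as the paper's appendix: Poisson superposition for $\widetilde m$ and $\E[\widetilde D_i]$, the elementary bounds on $1-e^{-t}$ combined with hub-control for $\E[D_i]$, and Markov's inequality on the excess $\widetilde m-m$ (loops plus $(\widetilde E_{ij}-1)_+$); your Chebyshev step simply makes explicit the paper's brief appeal to $\widetilde m\sim\Poi(L_n/2)$ with $L_n\to\infty$. One small slip in the stated rate: the missing diagonal term $x_i^2/L_n$ has relative size $x_i/L_n\le \max_j x_j/L_n=\sqrt{\varepsilon_n/L_n}$. This tends to $0$ by hub-control together with non-degeneracy, but it is not $O(\varepsilon_n)$ in general (take $x_i\equiv n^{-1/2}$), so the uniform rate should read $O\bigl(\varepsilon_n+\sqrt{\varepsilon_n/L_n}\bigr)$, which still gives $\E[D_i]=x_i(1+o(1))$.
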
 

\begin{proof}[Sketch of the proof]
    The results in \eqref{eq:result_multigraph} on the NR multigraph immediately follow from the model definition and edge independence. 
    
    Instead, the results in \eqref{eq:result_simplegraph} on the NR simple graph employ the regularity conditions. First, the amount of excess edges $\Delta_m:=\widetilde m - m$ that is lost from self-loops and duplicate edges erasure is proven to be negligible compared to the total amount of edges $\widetilde m$, via Markov's inequality. Second, using a first-order approximation, the expected degree of vertex $i$ in the simple graph approaches $x_i$ asymptotically.
\end{proof}

\subsection*{Generalizations}
\label{subsec:equiv-models}

Given a weight sequence $x$, more general inhomogeneous kernels may also be considered, properly adapting the generative algorithm described in the next section. For instance, models with
\[
p_{ij}=f(x_i,x_j)
\qquad\text{or}\qquad
p_{ij}=\frac{\kappa(x_i,x_j)}{n},
\]
for suitable functions $f$ or kernels $\kappa$.
Such generalizations are often useful when additional geometric, latent-space, or type-dependent effects are of interest.
However, in general, they no longer preserve the interpretation of the input sequence $x$ as the prescribed expected degree sequence, unless they are explicitly calibrated to do so.
For this reason, rank-1 models remain the most natural benchmark when the primary objective is expected-degree control.

\section{Event-driven algorithm for the NR model}
\label{sec:algorithm}

A standard fast baseline for sampling expected-degree random graphs is the \textit{edge-skipping} generator of Miller and Hagberg~\cite{miller2011efficient}, adapted from the fast algorithm generator introduced in \cite{batagelj2005efficient}. Their method targets the Chung-Lu model, but recently it has also been extended to generate Generalized random graphs \cite{li2025fast}. The intuitive idea is to exploit the sparsity of the graph by scanning candidate pairs in a suitable order and using geometric jumps to skip long runs of absent edges. Once the weights are sorted in decreasing order, the sampling phase runs in expected time $O(n+m)$. However, this requires a preprocessing step that sorts the input sequence, which costs $O(n\log n)$ time in the comparison model.

Here, instead, we revisit the static model in Section~\ref{sec:model} from a temporal \textit{edge-arrival} perspective, yielding an exact generative algorithm. The key observation is that the Norros--Reittu multigraph admits an equivalent event-driven representation: instead of sampling independently the multiplicity of every possible edge, one may first sample the total number of edge events and then generate each event by drawing its two endpoints independently from a distribution induced by the target weights.

More precisely, the construction can be viewed as a Poisson splitting mechanism. A global Poisson budget generates edge events, and each event is assigned independently to an unordered pair of endpoints according to the vertex distribution. This immediately yields a linear-time generation procedure in the number of realized events, after a linear preprocessing of the weights, and crucially avoids any sorting step on the input sequence.

\subsection{High-level idea}
\label{subsec:algo-idea}

Let $x \in \mathbb R_+^n$ and recall that $L_n = \sum_{i=1}^n x_i$. This sequence induces a natural probability distribution $\pi$ on the vertex set, which is obtained by normalizing $x$ by the total mass $L_n$. Namely,
\begin{equation}
        \pi(i):=\frac{x_i}{L_n}, \qquad i\in V.
\end{equation}
The algorithmic idea is the following:
\begin{enumerate}
    \item Sample the total number of edge events $K \sim \Poi\left(\frac{L_n}{2}\right).$
    \item For each event $r=1,\dots,K$, sample two endpoints $u,v \sim \pi$ independently.
    \item Insert the unordered pair $\{u,v\}$ into a multigraph. If $u=v$, this creates a self-loop; if the same pair is sampled multiple times, this creates parallel edges.
    \item If one wants a simple graph, delete self-loops and merge repeated copies of the same edge.
\end{enumerate}
The choice of the Poisson budget with parameter $L_n/2$ is canonical. Indeed, considering the output $\{u,v\}$ of a single edge event, the probability that the unordered set $\{u, v\}$ matches $\{i,j\}$, for any given $i,j \in V$ is
\begin{equation}
    \Prob(\{u, v\} = \{i,j\}) = 
    \begin{cases}
        2 \pi(i)\pi(j) = \frac{2 x_i x_j}{L_n^2} &\text{if $i \neq j$}\\
        \pi(i)^2 = \frac{x_i^2}{L_n^2} &\text{if $i = j$}        
    \end{cases}
\end{equation}
Hence, by Poisson thinning, the number of generated copies of the unordered pair $\{i,j\}$ is Poisson with mean $\frac{x_ix_j}{L_n}$, and the number of loops at $i$ is Poisson with mean $\frac{x_i^2}{2L_n}$. These are exactly the Norros--Reittu edge multiplicities from Definition~\ref{def:nr-model}. Therefore, the procedure above is not a heuristic approximation, but an exact alternative representation of the NR multigraph.

\subsection{Preprocessing of the weight sequence via the alias method}
\label{subsec:alias-preprocessing}

The only nontrivial ingredient in the event-driven representation is the repeated sampling of vertices from the discrete distribution $\pi$. Since this sampling must be performed independently for both endpoints of each edge event, it is crucial to support it in constant time per draw, after a one-time preprocessing step.

A standard solution is the \emph{alias method}, originally introduced by Walker and later refined by Vose into a linear-time preprocessing procedure \cite{walker1977efficient,vose1991linear}. The method represents the distribution $\pi$ by two arrays:
\begin{itemize}
    \item a cutoff array $q=(q_i)_{i=1}^n$ with $0\le q_i\le 1$,
    \item an alias array $A=(A_i)_{i=1}^n$ with $A_i\in[n]$.
\end{itemize}
Conceptually, the total mass of $x$ is redistributed into $n$ buckets, each with target mass $\mu:=L_n/n$. Bucket $i$ is assigned primarily to item $i$, but if $x_i<\mu$, then its missing mass is filled using the surplus of another item $A_i$, called its alias. Thus, within bucket $i$, a fraction $q_i$ corresponds to output $i$, while the remaining fraction $1-q_i$ corresponds to output $A_i$. Once the arrays $q$ and $A$ are available, one sample from $\pi$ is generated as follows:
\begin{enumerate}
    \item choose an index $I$ uniformly from $[n]$,
    \item sample $U\sim\mathrm{Unif}(0,1)$; if $U\le q_I$, return $I$, otherwise return its alias $A_I$.
\end{enumerate}
Therefore, each draw requires only a constant number of primitive operations: one uniform integer sample, one uniform real sample, two array accesses, and one comparison. 

\begin{figure}
    \centering
    \includegraphics[width=1\linewidth]{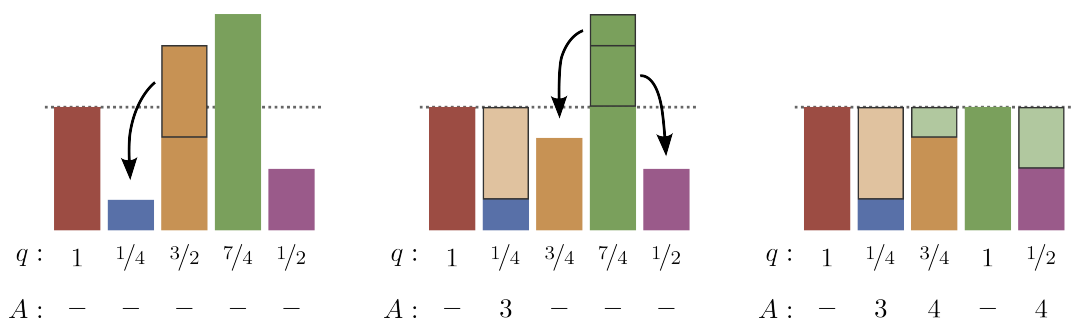}
    \caption{Illustrative example of the alias method for weights $x=\{4,1,6,7,2\}$. The average weight is $\mu=4$. The scaled weights are initialized as $q_i=x_i/\mu$, and the alias array $A$ is initially empty. In the first step, the deficit of item 2 is filled using surplus mass from item 3; in turn, item 3 becomes the alias of item 2, $A_2=3$, and $q_3$ is updated. The procedure is then iterated over remaining items with $q_i<1$ and no alias assigned yet, linking them to items with $q_j>1$, until all buckets reach level $1$ in scaled units (equivalently, height $\mu$ in the figure).}
\label{fig:alias_method}
\end{figure}

Figure \ref{fig:alias_method} shows how the mass redistribution of the alias method works in practice. Formally, we start by defining
\[
q_i := \frac{x_i}{\mu}.
\]
Items with $q_i<1$ are called \emph{small}, while those with $q_i>1$ are \emph{large}. We initialize two lists containing these two classes. At each step, one small item $s$ is paired with one large item $\ell$, and the bucket of $s$ is completed using surplus mass from $\ell$. Then, $\ell$ becomes the alias of $s$ and the residual mass of $\ell$ is updated:
\[
A_s=\ell, \qquad q_\ell \gets q_\ell - (1- q_s).
\]
After the update, $s$ is removed from the list of small items, whereas $\ell$ is returned to the appropriate list depending on whether its scaled residual mass is still above or below $1$. The process terminates when all buckets have been filled.
The correctness of the construction is immediate from the bucket decomposition. Indeed, for every $j\in[n]$,
\[
\Prob(\text{output } j)
= \frac{1}{n}q_j + \sum_{i:\,A_i=j}\frac{1}{n}(1-q_i)
= \frac{x_j}{n \mu} = \pi(j).
\]
Hence, the alias table $(q,A)$ provides an exact sampler for the target endpoint distribution.

The preprocessing requires $O(n)$ time and $O(n)$ memory. Indeed, after the initial $O(n)$-time scan that computes the scaled masses and partitions the items into lists of small and large entries, the construction proceeds by repeatedly pairing a small item with a large item. Each such pairing permanently finalizes the bucket of the small item and updates only the residual mass of the chosen large item. Hence, every iteration performs only constant work, and every index can enter and leave the working lists only a constant number of times. It follows that the total number of list operations is linear in $n$, yielding an $O(n)$ preprocessing algorithm and thereafter $O(1)$ time per sample. 

\subsection{Formal algorithm}
\label{subsec:formal-algorithm}

At this point, we can state the NR generator in Algorithm \ref{alg:nr-simple}, whose output exactly matches a random realization of the NR model. When the target object is the NR simple graph, one needs to delete loops and collapse edge multiplicities. In particular, since the simple graph is a deterministic projection of the multigraph, this implementation does not alter the output law. The projection from a multigraph to a simple graph can be performed online during generation by storing each non-loop edge in canonical order $(\min\{u,v\},\max\{u,v\})$ in a hash set. Figure \ref{fig:examples} shows three examples of the Algorithm in action.

\begin{algorithm}
\caption{Event-driven algorithm for the Norros--Reittu simple graph}
\label{alg:nr-simple}
\begin{algorithmic}[1]
\Require weight sequence $x_1,\dots,x_n \ge 0$
\Ensure simple graph $G_n(x)$ on vertex set $V=[n]$

\State Compute $L_n \gets \sum_{i=1}^n x_i$
\State Build an alias table for $\pi(i)=x_i/L_n$
\State Sample $K \sim \Poi(L_n/2)$
\State Initialize an empty set of edges $E$

\For{$r=1$ to $K$}
    \State Sample $u \sim \pi$
    \State Sample $v \sim \pi$
    \If{$u \neq v$}
        \State Insert $(\min\{u,v\},\max\{u,v\})$ into $E$
    \EndIf
\EndFor

\State \Return $(V,E)$
\end{algorithmic}
\end{algorithm}

\begin{figure}[t]
    \centering
    \includegraphics[width=0.95\linewidth]{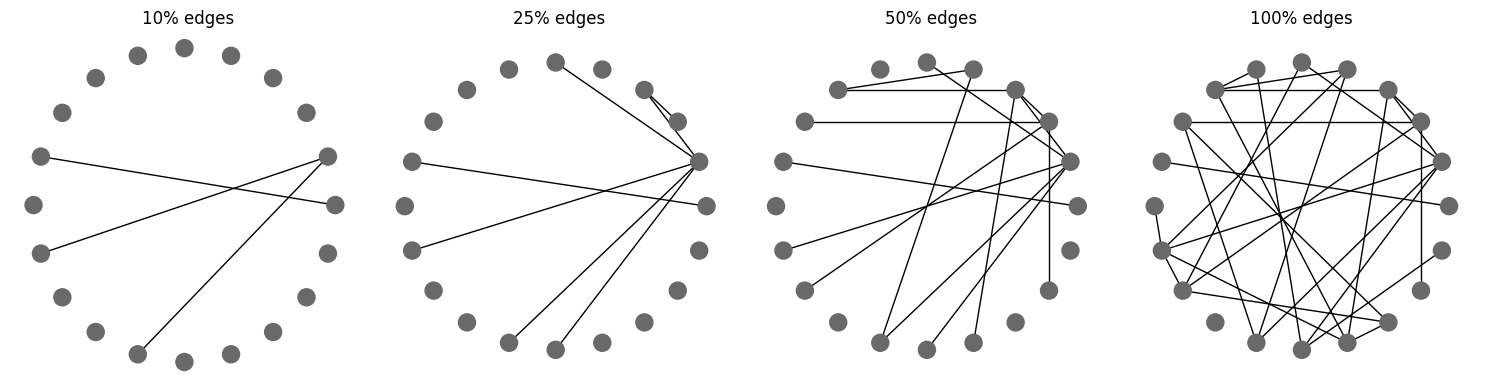}

    \vspace{0.8em}

    \includegraphics[width=0.95\linewidth]{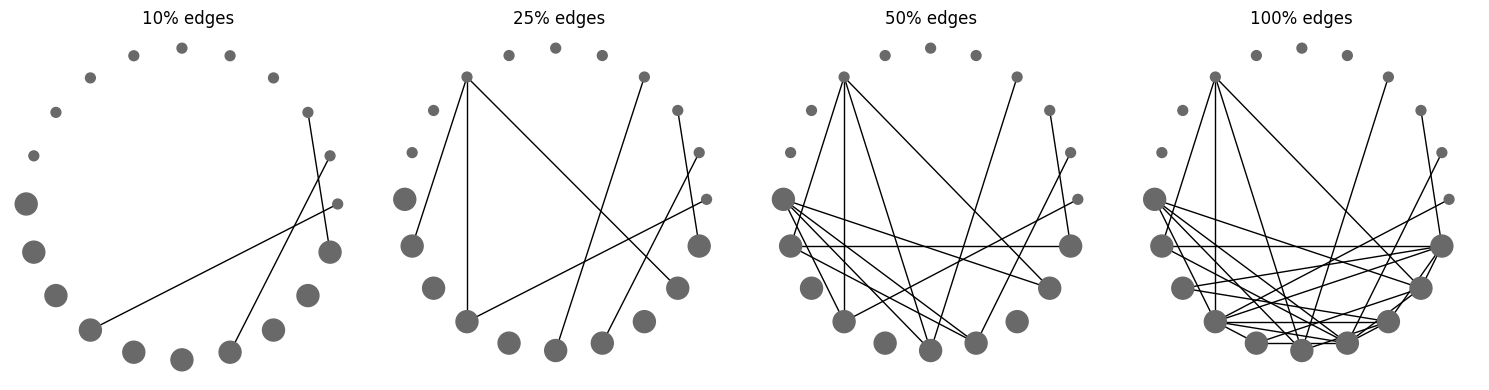}

    \vspace{0.8em}

    \includegraphics[width=0.95\linewidth]{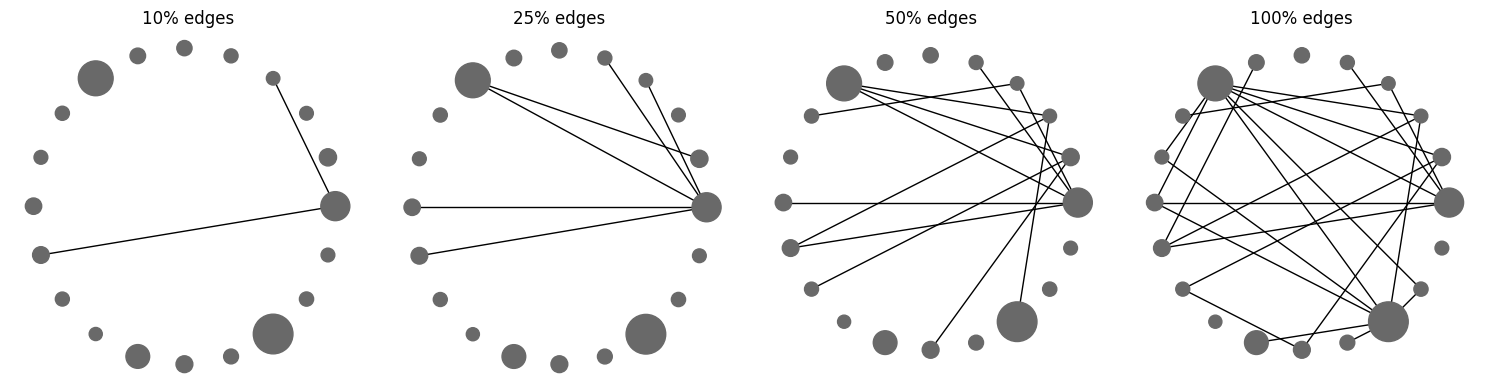}

    \caption{Temporal evolution of the event-driven generator at 10\%, 25\%, 50\%, and 100\% of the generated edge events. The vertex sizes are proportional to their weights. In all three examples, $n=20$ and the target average degree is $\mu=3$. Top: homogeneous weights, yielding an \ER graph. Middle: a two-block weight profile. Bottom: Pareto-distributed weights.}
    \label{fig:examples}
\end{figure}

The next theorem formally establishes that the event-driven procedure samples exactly the model introduced in Section~\ref{sec:model}.

\begin{theorem}[Exact Norros--Reittu graph law]
\label{thm:exact-multigraph-law}
Given any $x \in \mathbb{R}_+^n$, the random output of Algorithm~\ref{alg:nr-simple} is distributed exactly as the Norros--Reittu simple graph $G_n(x)$ from Definition~\ref{def:nr-model}.
\end{theorem}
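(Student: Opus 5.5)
The plan is to prove the stronger statement that the multiset of unordered pairs produced in the loop of Algorithm~\ref{alg:nr-simple}, before any loops are discarded or duplicates merged, has the same law as the Norros--Reittu multigraph $\widetilde G_n(x)$. The simple-graph claim then follows because the algorithm's output is the same deterministic map applied to this multigraph as the one in Definition~\ref{def:nr-model}. The map deletes loops and sets $\{i,j\}\in E$ if and only if the multiplicity is at least one. Inserting canonically ordered pairs into a hash set implements exactly this map.

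For $i\le j$, let $N_{ij}$ be the number of rounds $r\le K$ whose sampled pair $\{u,v\}$ equals $\{i,j\}$. The alias table is exact (the bucket-decomposition identity in Section~\ref{subsec:alias-preprocessing}), and the two draws in each round are independent. So, conditional on $K$, the rounds are i.i.d.\ categorical trials over the index set $\{(i,j): i\le j\}$, with probabilities $p_{ij}=2x_ix_j/L_n^2$ for $i<j$ and $p_{ii}=x_i^2/L_n^2$. These sum to $(\sum_i x_i)^2/L_n^2=1$. Hence, conditional on $K=k$, the vector $(N_{ij})_{i\le j}$ is multinomial$(k;(p_{ij}))$.

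The key step is Poisson splitting. With $K\sim\Poi(L_n/2)$, I will compute the joint probability mass function directly. For nonnegative integers $(n_{ij})$ with total $k$,
\[
\Prob\bigl(N_{ij}=n_{ij}\ \forall i\le j\bigr)=e^{-L_n/2}\frac{(L_n/2)^k}{k!}\cdot\frac{k!}{\prod n_{ij}!}\prod p_{ij}^{n_{ij}}=\prod_{i\le j}e^{-\lambda_{ij}}\frac{\lambda_{ij}^{n_{ij}}}{n_{ij}!},
\]
where $\lambda_{ij}=\frac{L_n}{2}p_{ij}$. The factorization uses $\sum_{i\le j}\lambda_{ij}=L_n/2$. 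This gives independent $\Poi(\lambda_{ij})$ counts with $\lambda_{ij}=x_ix_j/L_n$ for $i\ne j$ and $\lambda_{ii}=x_i^2/(2L_n)$. These match the multiplicities $\widetilde E_{ij}$ of Definition~\ref{def:nr-model}, which proves the multigraph identity. A two-line alternative is to compare probability generating functions: $\E\prod s_{ij}^{N_{ij}}=\exp\bigl(\tfrac{L_n}{2}(\sum p_{ij}s_{ij}-1)\bigr)$.

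The only real obstacle is bookkeeping rather than depth. I must use unordered pairs consistently, so that the ordered outcomes $(i,j)$ and $(j,i)$ are merged, which produces the factor $2$ off the diagonal. I must also handle degenerate inputs. If $L_n=0$, then $\pi$ is undefined, but $K=0$ almost surely and both laws are the empty graph. Vertices with $x_i=0$ simply have zero rates on both sides. Finally, I note that the independence across rounds holds because fresh randomness is used for every draw, and the alias sampler's per-draw law does not depend on the past.
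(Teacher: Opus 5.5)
Your proposal is correct and follows the same route as the paper: you identify the pre-projection multiset of sampled pairs with the NR multigraph by Poisson splitting of the $\Poi(L_n/2)$ event budget over the pair probabilities $2\pi(i)\pi(j)$ and $\pi(i)^2$, and then apply the same deterministic loop-deletion and multiplicity-collapse map. The only differences are that you verify the splitting step explicitly, by factorizing the Poisson--multinomial joint mass function, which also yields the joint independence the paper simply cites, and that you additionally treat the degenerate case $L_n=0$.
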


\begin{proof}
For each fixed unordered pair $\{i,j\}$ with $i\neq j$, a single edge event in Algorithm~\ref{alg:nr-simple} selects that pair if and only if $(u,v)=(i,j)$ or $(j,i)$, which occurs with probability
\[
2\pi(i)\pi(j)=\frac{2x_i x_j}{L_n^2}.
\]
Since the total number of edge events is $K\sim\Poi(L_n/2)$, Poisson thinning implies that the total number of generated copies of $\{i,j\}$ is Poisson with mean
\[
\frac{L_n}{2}\cdot \frac{2x_i x_j}{L_n^2}=\frac{x_i x_j}{L_n}.
\]
Similarly, for a loop at vertex $i$, a single event produces it with probability $\pi(i)^2=x_i^2/L_n^2$, so the number of loops at $i$ is Poisson with mean
\[
\frac{L_n}{2}\cdot \frac{x_i^2}{L_n^2}=\frac{x_i^2}{2L_n}.
\]
Moreover, the counts associated with distinct unordered pairs are jointly independent by the splitting property of Poisson processes. Therefore, storing all sampled pairs, possibly including self-loops and duplicate edges, yields a multigraph with the same law as the NR multigraph from Definition~\ref{def:nr-model}.

Algorithm~\ref{alg:nr-simple} then applies the deterministic projection inside a hash set that avoids loops and merges repeated copies of the same edge. Therefore, its output is distributed exactly as the NR simple graph.
\end{proof}

Thus, the algorithmic procedure yields a generator for graphs with a prescribed expected degree sequence, provided the expected degree sequence satisfies the standard conditions detailed earlier.

\begin{corollary}[Expected-degree graph algorithm]
\label{cor:exact-simple-law}
Assume $x \in \mathbb{R}_+^n$ satisfies the regularity conditions in Definition \ref{def:assumptions}. Then, Algorithm \ref{alg:nr-simple} generates a simple random graph with asymptotic expected degrees $x$. 
\end{corollary}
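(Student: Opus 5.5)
The corollary follows by composing Theorem~\ref{thm:exact-multigraph-law} with Proposition~\ref{prop:NR_edge_degree}. Theorem~\ref{thm:exact-multigraph-law} says the output of Algorithm~\ref{alg:nr-simple} has exactly the law of $G_n(x)$, for every $x\in\mathbb R_+^n$ and without any assumptions. So every distributional quantity of the output, in particular each degree $D_i$, can be computed in the NR simple graph. Since $x$ is assumed non-degenerate and hub-controlled, the second part of Proposition~\ref{prop:NR_edge_degree} applies and gives $\E[D_i]=x_i(1+o(1))$ for each $i\in V$. This is precisely the claim that the generated simple graph has asymptotic expected degrees $x$. Along the way we also get $m=\widetilde m(1+o_p(1))$ with $\widetilde m\sim\Poi(L_n/2)$, so the output has essentially the intended edge budget.

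The only step that needs care is the uniformity of the $o(1)$ term, since the phrase ``asymptotic expected degrees $x$'' is presumably meant to hold for all vertices at once. I would make this explicit by a short direct computation rather than relying only on the sketch of Proposition~\ref{prop:NR_edge_degree}. Write $t_{ij}=x_ix_j/L_n$. Then $\E[D_i]=\sum_{j\ne i}(1-e^{-t_{ij}})$, and using $t-t^2/2\le 1-e^{-t}\le t$ we get
\[
x_i-\frac{x_i^2}{L_n}-\frac{x_i^2}{2L_n^2}\sum_{j}x_j^2 \;\le\; \E[D_i]\;\le\; x_i .
\]
Here the $-x_i^2/L_n$ term accounts for removing $j=i$. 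Divide by $x_i$, assuming $x_i>0$; vertices with $x_i=0$ are isolated and trivially have $\E[D_i]=0=x_i$. The relative error is then at most
\[
\frac{x_i}{L_n}+\frac{x_i}{2L_n^2}\sum_j x_j^2\le \frac{\max_k x_k}{L_n}+\frac{(\max_k x_k)^2}{2L_n}.
\]
For the second bound I used $\sum_j x_j^2\le \max_k x_k\, L_n$ and $x_i\le\max_k x_k$. Both terms tend to $0$ uniformly in $i$ by hub-control, $\max_k x_k=o(L_n^{1/2})$, together with non-degeneracy.

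The main obstacle is therefore only interpretive: fixing what ``asymptotic expected degrees'' means and confirming that the regularity conditions give the uniform bound above. Once that is settled, the corollary is immediate from Theorem~\ref{thm:exact-multigraph-law} and Proposition~\ref{prop:NR_edge_degree}.
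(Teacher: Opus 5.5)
Your proof is correct and follows the paper's own argument, which is exactly the composition of Theorem~\ref{thm:exact-multigraph-law} with Proposition~\ref{prop:NR_edge_degree}. Your explicit two-sided bound, uniform in $i$, is the same computation the paper carries out in the appendix (Lemma~\ref{lem:nr-simple-degrees}), only stated more carefully in relative form and with the $x_i=0$ case handled separately.
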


\begin{proof}
    This is immediate from combining Theorem~\ref{thm:exact-multigraph-law} with Proposition~\ref{prop:NR_edge_degree}.
\end{proof}

\subsection{Running time and memory usage}
\label{subsec:complexity}

We next analyze the cost of the event-driven generator.

\begin{theorem}
\label{thm:running-time}
Under regularity conditions of Definition \ref{def:assumptions}, Algorithm \ref{alg:nr-simple} runs in time $O(n+m)$ with high probability, and it uses $O(n+m)$ space.
\end{theorem}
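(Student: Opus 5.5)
The plan is to split the cost of Algorithm~\ref{alg:nr-simple} into preprocessing and sampling, bound each deterministically in terms of $n$ and the number of edge events $K$, and then convert $K$ into $m$ using Proposition~\ref{prop:NR_edge_degree}.

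For the deterministic part, computing $L_n$ takes $O(n)$ time, and building the alias table (Section~\ref{subsec:alias-preprocessing}) takes $O(n)$ time and memory. Sampling $K\sim\Poi(L_n/2)$ can be done in $O(1)$ expected time, or in $O(1+K)$ time by a simple inversion or sequential method; either cost is absorbed below. Each of the $K$ iterations of the loop does constant work: two alias draws, one comparison, and one hash-set insertion in $O(1)$ expected or amortized time. The total running time is therefore $O(n+K)$. The space used is $O(n)$ for the alias arrays plus $O(|E|)=O(m)$ for the hash set, since only distinct non-loop pairs are ever stored; this gives the space bound $O(n+m)$ directly.

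It remains to show $K=O(m)$ with high probability. By Theorem~\ref{thm:exact-multigraph-law}, $K$ has the law of $\widetilde m$ and the output has the law of $G_n(x)$, jointly with the coupling in which $m$ is the simple projection of the $K$ sampled pairs. Proposition~\ref{prop:NR_edge_degree} then gives $m=\widetilde m(1+o_p(1))$, so with probability tending to one $m\ge K/2$, and hence $K\le 2m$.

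I expect the main obstacle to be that Proposition~\ref{prop:NR_edge_degree} gives only convergence in probability, which is weaker than "with high probability" if a polynomial rate is intended. If needed, I would reprove the bound directly. Write $\Delta_m=K-m$. Then
\[
\E[\Delta_m]\le \sum_{i}\frac{x_i^2}{2L_n}+\sum_{i<j}\Big(\frac{x_ix_j}{L_n}-1+e^{-x_ix_j/L_n}\Big)\le \frac{\max_i x_i^2}{L_n}\cdot O(L_n),
\]
which is $o(L_n)$ by hub-control. Separately, $K$ concentrates around $L_n/2$ by Poisson Chernoff bounds, with failure probability $e^{-\Omega(L_n)}$; this tends to zero by non-degeneracy. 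Markov's inequality applied to $\Delta_m$ then gives $\Delta_m\le K/2$ with probability $1-o(1)$. These estimates combine to $K\le 2m$ with high probability, and hence a running time of $O(n+m)$.
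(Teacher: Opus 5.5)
Your proposal is correct and is essentially the paper's proof. Both bound the cost by $O(n+K)$, using the $O(n)$ alias preprocessing and $O(1)$ expected work per event, and then replace $K=\widetilde m$ by $m$ via $m=\widetilde m(1+o_p(1))$ from Proposition~\ref{prop:NR_edge_degree}; your space argument (only distinct non-loop pairs are ever stored) is a slightly more explicit version of the paper's one-line remark.

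Your fallback argument is just Lemma~\ref{lem:excess-edges} plus Poisson concentration. It therefore also gives only probability $1-o(1)$, not a polynomial rate, so it does not resolve the concern you raised; this is nonetheless exactly the sense of ``with high probability'' the paper uses. One small slip: the loop term $\sum_i x_i^2/(2L_n)$ is bounded by $\max_i x_i/2$, not by $O(\max_i x_i^2)$, but it is still $o(L_n)$, so your conclusion stands.
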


\begin{proof}
Computing $L_n$ and building the alias table both take $O(n)$ time and $O(n)$ memory.
Sampling $\widetilde m\sim\Poi(L_n/2)$ takes constant time.
Conditional on $\widetilde m$, each event requires two endpoint samples from the alias structure, hence $O(1)$ time, plus $O(1)$ expected update time in the chosen data structure.
Therefore, the total running time is
\[
O(n+\widetilde m)
\]
and, from Proposition \ref{prop:NR_edge_degree}, $\widetilde{m}$ matches $m$ with high probability.
For the simple graph implementation, storing edges in a hash set gives expected constant-time insertion and lookup per event, yielding an expected total time of $O(n+\widetilde m)$. The memory bound is immediate from the output representation.
\end{proof}

Theorem~\ref{thm:running-time} shows that the generation cost is linear in the input size plus the expected event budget. This reveals an interesting conclusion for the generation of a sparse random graph, when $m = \Theta(n)$.

\begin{corollary}[Fast generation of sparse expected-degree graphs.]\label{cor:fast_generation}
Suppose that $x$ satisfies regularity conditions of Definition \ref{def:assumptions} and the total weight mass is linear in the size of the graph, $L_n = \Theta(n)$. Then, Algorithm \ref{alg:nr-simple} generates a random graph with expected-degree sequence $x$ in time $O(n)$ with high probability.
\end{corollary}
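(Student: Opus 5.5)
The plan is to combine Theorem~\ref{thm:running-time} with a concentration statement for the Poisson budget. Theorem~\ref{thm:running-time} gives a running time of $O(n+\widetilde m)$, where $\widetilde m=K\sim\Poi(L_n/2)$: $O(n)$ for computing $L_n$ and the alias table, plus $O(1)$ expected work per event. So it suffices to show $\widetilde m=O(n)$ with high probability when $L_n=\Theta(n)$. Corollary~\ref{cor:exact-simple-law} separately guarantees that the output has asymptotic expected degrees $x$, since the regularity conditions of Definition~\ref{def:assumptions} are assumed.

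The key step is a Poisson tail bound. Since $L_n\le Cn$ for some constant $C$ and all large $n$, we have $\E[\widetilde m]=L_n/2\le Cn/2$. A Chernoff bound for Poisson variables, $\Prob(\Poi(\lambda)\ge 2\lambda)\le (e/4)^{\lambda}$, gives $\widetilde m\le Cn$ with probability at least $1-(e/4)^{L_n/2}$. This tends to $1$ because $L_n=\Theta(n)\to\infty$, which is also the non-degeneracy condition. Chebyshev's inequality would give the same conclusion with a weaker rate, since $\mathrm{Var}(\widetilde m)=L_n/2$. Together with the first part of \eqref{eq:result_simplegraph} in Proposition~\ref{prop:NR_edge_degree}, this also shows that $m=\Theta(n)$ with high probability, which is consistent with the $O(n+m)$ form of Theorem~\ref{thm:running-time}.

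The main obstacle is that the per-event cost is $O(1)$ only \emph{in expectation}, because of hashing, while we want a high-probability bound. I would handle it as follows. Condition on $\widetilde m$. The total hashing cost then has conditional expectation $O(\widetilde m)$, so by Markov's inequality it is at most $\omega_n\widetilde m$ with probability $1-O(1/\omega_n)$, for any $\omega_n\to\infty$.

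To avoid the extra factor $\omega_n$, a cleaner route is to remove hashing altogether. Write each non-loop pair $(\min\{u,v\},\max\{u,v\})$ into an array of length $\widetilde m$. Then deduplicate by a two-pass bucket sort, first on the second coordinate and then stably on the first, using $n$ buckets. This costs deterministic $O(n+\widetilde m)$ time. On the high-probability event $\widetilde m\le Cn$, the total time is therefore deterministically $O(n)$.

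Finally, I would note that the alias sampling uses a constant number of primitive operations per draw, so the only randomness in the cost is $\widetilde m$ itself. This gives running time $O(n)$ with probability $1-o(1)$.
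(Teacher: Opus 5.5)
Your argument is correct, but it differs from the paper's in two ways.

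\textbf{Route to the bound.} The paper's proof is a two-line reduction. It takes $m=\widetilde m(1+o_p(1))$ from Proposition~\ref{prop:NR_edge_degree}, which rests on Lemma~\ref{lem:excess-edges} and hence on hub-control, and combines it with $\widetilde m\sim\Poi(L_n/2)$ to get $m=\Theta(n)$ with high probability. It then invokes the $O(n+m)$ bound of Theorem~\ref{thm:running-time}. You instead bound $\widetilde m$ directly with a Poisson Chernoff bound, since $\widetilde m$ is the quantity that actually drives the cost. As a result, the timing claim needs only $L_n=\Theta(n)$, and it comes with an explicit failure probability $(e/4)^{L_n/2}=e^{-\Theta(n)}$. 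The regularity conditions enter only through Corollary~\ref{cor:exact-simple-law}, for the expected-degree part.

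\textbf{Hashing cost.} You correctly notice that Theorem~\ref{thm:running-time}, and therefore the paper's proof of this corollary, establishes only expected $O(1)$ hash cost per event. Its ``with high probability'' claim is therefore not fully justified. Your offline deduplication, by two-pass counting sort followed by a linear scan, makes the cost deterministically $O(n+\widetilde m)$ given $\widetilde m$, which closes that gap. The price is that you are analysing an implementation variant of Algorithm~\ref{alg:nr-simple}. It has the same output law, but it is not the online hash set described in the text; for that hash set, your Markov argument only gives $O(\omega_n n)$.

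\textbf{Minor caveat.} Most Poisson samplers have randomised running time, so ``the only randomness in the cost is $\widetilde m$'' is slightly too strong. This does not matter, because sampling $K$ by sequential inversion costs $O(1+\widetilde m)$, which is absorbed into your bound.
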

\begin{proof}
The number of edges $m$ matches $\widetilde m$ asymptotically and $\widetilde m \sim \Poi(L_n/2)$. Therefore, if the total weight mass $L_n$ is of order $n$, then $m = \Theta(n)$ with high probability. The result then follows from Theorem \ref{thm:running-time}.
\end{proof}

\section{Conclusion and future directions}
\label{sec:conclusion}

Although the Miller--Hagberg generator and the present algorithm are exact for slightly different (but equivalent) modeling formulations, both address the task of generating graphs with a target expected-degree sequence. The present approach, however, follows a different principle. Rather than traversing the upper-triangular matrix of candidate pairs and skipping non-edges, it samples only the edge events that actually occur.
This distinction has a concrete algorithmic consequence. After an $O(n)$ preprocessing step to build the alias sampler for $\pi$, each realized event is generated in $O(1)$ time, with no need to sort the weights or maintain a monotone scan over candidate pairs. Thus, the preprocessing is genuinely linear in the input size, and the generation phase is driven directly by the (random) event budget itself. Our result is particularly valuable for sparse graphs. In this case, the running time of the current state-of-the-art algorithm, provided by Miller \& Hagberg is $O(n \log n)$, whereas, from Corollary \ref{cor:fast_generation}, our algorithm improves it by a factor $\log n$.
Therefore, the main advantage of our method is dual. First, it offers a novel, simple graph-generative scheme that shifts from the \textit{edge-skipping} to the \textit{edge-arrival} paradigm. In fact, the mechanism behind our algorithm is readily understandable and particularly easy to implement. Second, it entirely avoids monotone weight-sorting requirements, offering an algorithm with linear preprocessing and generation cost proportional to the realized event budget for expected-degree random graphs.\\

Several directions remain open. First, the event-driven viewpoint provides a genuine temporal or dynamical extension of the model, where edge arrivals are retained with their timestamps rather than projected immediately into a static snapshot, and they persist for a fixed or random time. This would lead to a rank-1 inhomogeneous temporal graph model, naturally connected to the literature on dynamic \ER graphs and related edge-Markov or edge-activation constructions  \cite{becker2023giant, broutin2024increasing, clementi2013rumor}. 
Second, the same principle behind Algorithm \ref{alg:nr-simple} extends naturally to directed graphs: given two weight sequences $x^{\mathrm{out}}$ and $x^{\mathrm{in}}$, one may build two alias samplers for the corresponding marginals and generate ordered endpoint pairs accordingly, so that the expected out-degrees and in-degrees are controlled separately \cite{durak2013scalable,yan2016asymptotics,bianchi2025mixing}.
Third, it would be interesting to investigate higher-order variants. Hypergraph analogs of rank-1 expected-degree models have already been considered \cite{kaminski2019clustering,saracco2025entropy}, and the present construction suggests that a similar Poisson arrival mechanism may provide an efficient generator for expected-degree hypergraphs. 

Finally, the algorithm appears especially amenable to parallelization: once the Poisson event budget is sampled, the arrivals can be split across $P$ workers, generated independently, and merged only at the end through the projection step. This suggests a simple parallel implementation using the techniques in \cite{hubschle-Schneider2019parallel}, which may be particularly attractive at a very large scale.

\bibliography{references}

@article{norros2006conditionally,
  title={On a conditionally Poissonian graph process},
  author={Norros, Ilkka and Reittu, Hannu},
  journal={Advances in Applied Probability},
  volume={38},
  number={1},
  pages={59--75},
  year={2006},
  publisher={Cambridge University Press},
  doi={10.1239/aap/1143936140}
}

@article{britton2006generating,
  title={Generating simple random graphs with prescribed degree distribution},
  author={Britton, Tom and Deijfen, Maria and Martin-L{\"o}f, Anders},
  journal={Journal of statistical physics},
  volume={124},
  number={6},
  pages={1377--1397},
  year={2006},
  publisher={Springer},
  doi={10.1007/s10955-006-9168-x}
}

@inproceedings{miller2011efficient,
author="Miller, Joel C.
and Hagberg, Aric",
editor="Frieze, Alan
and Horn, Paul
and Pra{\l}at, Pawe{\l}",
title="Efficient Generation of Networks with Given Expected Degrees",
booktitle="Algorithms and Models for the Web Graph",
year="2011",
publisher="Springer Berlin Heidelberg",
address="Berlin, Heidelberg",
pages="115--126",
doi="10.1007/978-3-642-21286-4_10"}

@article{chung2002connected,
  title={Connected components in random graphs with given expected degree sequences},
  author={Chung, Fan and Lu, Linyuan},
  journal={Annals of combinatorics},
  volume={6},
  number={2},
  pages={125--145},
  year={2002},
  publisher={Springer},
  doi={10.1007/PL00012580}
}

@book{Hofstad_2016, 
    place={Cambridge}, 
    series={Cambridge Series in Statistical and Probabilistic Mathematics}, 
    title={Random Graphs and Complex Networks (Vol. 1)}, 
    publisher={Cambridge University Press}, 
    author={Hofstad, Remco van der}, 
    year={2016}, 
    collection={Cambridge Series in Statistical and Probabilistic Mathematics},
    doi={10.1017/9781316779422}
}

@article{bollobas1980probabilistic,
title = {A Probabilistic Proof of an Asymptotic Formula for the Number of Labelled Regular Graphs},
journal = {European Journal of Combinatorics},
volume = {1},
number = {4},
pages = {311-316},
year = {1980},
issn = {0195-6698},
doi = {https://doi.org/10.1016/S0195-6698(80)80030-8},
author = {Béla Bollobás},
}

@article{bender1978asymptotic,
title = {The asymptotic number of labeled graphs with given degree sequences},
journal = {Journal of Combinatorial Theory, Series A},
volume = {24},
number = {3},
pages = {296-307},
year = {1978},
issn = {0097-3165},
doi = {https://doi.org/10.1016/0097-3165(78)90059-6},
author = {Edward A Bender and E.Rodney Canfield}
}

@InProceedings{li2025fast,
author="Li, Xuanchi
and Wang, Xin
and Kojaku, Sadamori",
editor="Cherifi, Hocine
and Rocha, Luis M.
and Cherifi, Chantal
and Ertem, Melissa Zeynep",
title="Fast Unbiased Sampling of Networks with Given Expected Degrees and Strengths",
booktitle="Complex Networks {\&} Their Applications XIV",
year="2026",
publisher="Springer Nature Switzerland",
address="Cham",
pages="330--341",
doi="10.1007/978-3-032-16645-6_28"
}

@article{erdos1960evolution,
  title={On the evolution of random graphs},
  author={Erd\"os, Paul and R{\'e}nyi, Alfr{\'e}d},
  journal={Publ. Math. Inst. Hungar. Acad. Sci},
  volume={5},
  number={1},
  pages={17--61},
  year={1960}
}

@article{roberts2018exceptional,
author = {Matthew I. Roberts and Batı Şeng{\"u}l},
title = {{Exceptional times of the critical dynamical Erdős–Rényi graph}},
volume = {28},
journal = {The Annals of Applied Probability},
number = {4},
publisher = {Institute of Mathematical Statistics},
pages = {2275 -- 2308},
year = {2018},
doi = {10.1214/17-AAP1357},
URL = {https://doi.org/10.1214/17-AAP1357}
}

@book{bollobas2001random, place={Cambridge}, edition={2}, series={Cambridge Studies in Advanced Mathematics}, title={Random Graphs}, publisher={Cambridge University Press}, author={Bollobás, Béla}, year={2001}, collection={Cambridge Studies in Advanced Mathematics}, doi={10.1017/CBO9780511814068}}

@article{broutin2024increasing,
author = {Nicolas Broutin and Nina Kamčev and G{\'a}bor Lugosi},
title = {{Increasing paths in random temporal graphs}},
volume = {34},
journal = {The Annals of Applied Probability},
number = {6},
publisher = {Institute of Mathematical Statistics},
pages = {5498 -- 5521},
year = {2024},
doi = {10.1214/24-AAP2097},
URL = {https://doi.org/10.1214/24-AAP2097}
}

@inproceedings{becker2023giant,
  author =	{Becker, Ruben and Casteigts, Arnaud and Crescenzi, Pierluigi and Kodric, Bojana and Renken, Malte and Raskin, Michael and Zamaraev, Viktor},
  title =	{{Giant Components in Random Temporal Graphs}},
  booktitle =	{Approximation, Randomization, and Combinatorial Optimization. Algorithms and Techniques (APPROX/RANDOM 2023)},
  pages =	{29:1--29:17},
  series =	{Leibniz International Proceedings in Informatics (LIPIcs)},
  ISBN =	{978-3-95977-296-9},
  ISSN =	{1868-8969},
  year =	{2023},
  volume =	{275},
  editor =	{Megow, Nicole and Smith, Adam},
  publisher =	{Schloss Dagstuhl -- Leibniz-Zentrum f{\"u}r Informatik},
  address =	{Dagstuhl, Germany},
  doi =		{10.4230/LIPIcs.APPROX/RANDOM.2023.29},

}

@article{batagelj2005efficient,
  title = {Efficient generation of large random networks},
  author = {Batagelj, Vladimir and Brandes, Ulrik},
  journal = {Phys. Rev. E},
  volume = {71},
  issue = {3},
  pages = {036113},
  numpages = {5},
  year = {2005},
  month = {Mar},
  publisher = {American Physical Society},
  doi = {10.1103/PhysRevE.71.036113}
}

@article{vose1991linear,
  author={Vose, M.D.},
  journal={IEEE Transactions on Software Engineering}, 
  title={A linear algorithm for generating random numbers with a given distribution}, 
  year={1991},
  volume={17},
  number={9},
  pages={972-975},
  doi={10.1109/32.92917}}

@article{walker1977efficient,
author = {Walker, Alastair J.},
title = {An Efficient Method for Generating Discrete Random Variables with General Distributions},
year = {1977},
issue_date = {Sept. 1977},
publisher = {Association for Computing Machinery},
address = {New York, NY, USA},
volume = {3},
number = {3},
issn = {0098-3500},
url = {https://doi.org/10.1145/355744.355749},
doi = {10.1145/355744.355749},
journal = {ACM Trans. Math. Softw.},
month = sep,
pages = {253–256},
numpages = {4}
}

@article{drobyshevskiy2019random,
author = {Drobyshevskiy, Mikhail and Turdakov, Denis},
title = {Random Graph Modeling: A Survey of the Concepts},
year = {2019},
issue_date = {November 2020},
publisher = {Association for Computing Machinery},
address = {New York, NY, USA},
volume = {52},
number = {6},
issn = {0360-0300},
doi = {10.1145/3369782},
journal = {ACM Comput. Surv.},
month = dec,
articleno = {131},
numpages = {36}
}

@Inbook{meyer2022,
author="Meyer, Ulrich
and Penschuck, Manuel",
editor="Bast, Hannah
and Korzen, Claudius
and Meyer, Ulrich
and Penschuck, Manuel",
title="Generating Synthetic Graph Data from Random Network Models",
bookTitle="Algorithms for Big Data: DFG Priority Program 1736",
year="2022",
publisher="Springer Nature Switzerland",
address="Cham",
pages="21--38",
isbn="978-3-031-21534-6",
doi="10.1007/978-3-031-21534-6_2",
url="https://doi.org/10.1007/978-3-031-21534-6_2"
}

@InProceedings{fabius2026,
author="Fabius, Anthony
and Pandey, Ujwal
and Lin, Dong
and Kaul, Yash
and Slota, George M.",
editor="Cherifi, Hocine
and Rocha, Luis M.
and Cherifi, Chantal
and Ertem, Melissa Zeynep",
title="Scalable Benchmark Graph Generation for the Maximum Cardinality Matching and Distance-1 Minimum Coloring Problems",
booktitle="Complex Networks {\&} Their Applications XIV",
year="2026",
publisher="Springer Nature Switzerland",
address="Cham",
pages="318--329",
isbn="978-3-032-16645-6",
doi="10.1007/978-3-032-16645-6_27"
}

@article{kaminski2021, title={Artificial Benchmark for Community Detection (ABCD)—Fast random graph model with community structure}, volume={9}, DOI={10.1017/nws.2020.45}, number={2}, journal={Network Science}, author={Kamiński, Bogumił and Prałat, Paweł and Théberge, François}, year={2021}, pages={153–178}}

@InProceedings{schwartz2024,
author="Schwartz, Catherine
and Savkli, Cetin
and Galante, Amanda
and Czaja, Wojciech",
editor="Cherifi, Hocine
and Rocha, Luis M.
and Cherifi, Chantal
and Donduran, Murat",
title="Tailoring Benchmark Graphs to Real-World Networks for Improved Prediction of Community Detection Performance",
booktitle="Complex Networks {\&} Their Applications XII",
year="2024",
publisher="Springer Nature Switzerland",
address="Cham",
pages="108--120",
isbn="978-3-031-53499-7",
doi="10.1007/978-3-031-53499-7_9"
}

@Inbook{zweig2016,
author="Zweig, Katharina A.",
title="Random Graphs as Null Models",
bookTitle="Network Analysis Literacy: A Practical Approach to the Analysis of Networks",
year="2016",
publisher="Springer Vienna",
address="Vienna",
pages="183--214",
isbn="978-3-7091-0741-6",
doi="10.1007/978-3-7091-0741-6_7",
url="https://doi.org/10.1007/978-3-7091-0741-6_7"
}

@article{casiraghi2021configuration,
  title={Configuration models as an urn problem},
  author={Casiraghi, Giona and Nanumyan, Vahan},
  journal={Scientific reports},
  volume={11},
  number={1},
  pages={13416},
  year={2021},
  publisher={Nature Publishing Group UK London},
  doi={10.1038/s41598-021-92519-y}
}

@article{bartlett2012epidemic, title={EPIDEMIC DYNAMICS ON RANDOM AND SCALE-FREE NETWORKS}, volume={54}, DOI={10.1017/S1446181112000302}, number={1–2}, journal={The ANZIAM Journal}, author={Bartlett, J. and Plank, M. J.}, year={2012}, pages={3–22}}

@book{barrat2008, place={Cambridge}, title={Dynamical Processes on Complex Networks}, publisher={Cambridge University Press}, author={Barrat, Alain and Barthélemy, Marc and Vespignani, Alessandro}, year={2008}, doi={10.1017/CBO9780511791383}}

@inproceedings{roucairol2022refutation,
author="Roucairol, Milo
and Cazenave, Tristan",
editor="Zhang, Yong
and Miao, Dongjing
and M{\"o}hring, Rolf",
title="Refutation of Spectral Graph Theory Conjectures with Monte Carlo Search",
booktitle="Computing and Combinatorics",
year="2022",
publisher="Springer International Publishing",
address="Cham",
pages="162--176",
isbn="978-3-031-22105-7",
doi="10.1007/978-3-031-22105-7_15"
}

@book{newman2018networks,
    author = {Newman, Mark},
    title = {Networks},
    publisher = {Oxford University Press},
    year = {2018},
    month = {07},
    isbn = {9780198805090},
    doi = {10.1093/oso/9780198805090.001.0001}
}

@article{ghavasieh2024diversity,
  title={Diversity of information pathways drives sparsity in real-world networks},
  author={Ghavasieh, Arsham and De Domenico, Manlio},
  journal={Nature Physics},
  volume={20},
  number={3},
  pages={512--519},
  year={2024},
  publisher={Nature Publishing Group UK London},
  doi={10.1038/s41567-023-02330-x}
}

@article{broido2019scale,
  title={Scale-free networks are rare},
  author={Broido, Anna D and Clauset, Aaron},
  journal={Nature communications},
  volume={10},
  number={1},
  pages={1017},
  year={2019},
  publisher={Nature Publishing Group UK London},
  doi={10.1038/s41467-019-08746-5}
}

@article{bringmann2019geometric,
title = {Geometric inhomogeneous random graphs},
journal = {Theoretical Computer Science},
volume = {760},
pages = {35-54},
year = {2019},
issn = {0304-3975},
doi = {https://doi.org/10.1016/j.tcs.2018.08.014},
author = {Karl Bringmann and Ralph Keusch and Johannes Lengler},
}

@incollection{penschuck2022recent,
  author       = {Manuel Penschuck and
                  Ulrik Brandes and
                  Michael Hamann and
                  Sebastian Lamm and
                  Ulrich Meyer and
                  Ilya Safro and
                  Peter Sanders and
                  Christian Schulz},
  editor       = {David A. Bader},
  title        = {Recent Advances in Scalable Network Generation},
  booktitle    = {Massive Graph Analytics},
  pages        = {333--376},
  publisher    = {Chapman and Hall/CRC},
  year         = {2022},
  doi          = {10.1201/9781003033707-16},

  }

@inproceedings{alam2016efficient,
  author       = {Md. Maksudul Alam and
                  Maleq Khan and
                  Anil Vullikanti and
                  Madhav V. Marathe},
  editor       = {John West and
                  Cherri M. Pancake},
  title        = {An efficient and scalable algorithmic method for generating large-scale random graphs},
  booktitle    = {Proceedings of the International Conference for High Performance Computing,
                  Networking, Storage and Analysis, {SC} 2016, Salt Lake City, UT, USA,
                  November 13-18, 2016},
  pages        = {372--383},
  publisher    = {{IEEE} Computer Society},
  year         = {2016},
  doi          = {10.1109/SC.2016.31},
}

@article{moreno2018scalable,
  author       = {Sebasti{\'{a}}n Moreno and
                  Joseph J. Pfeiffer III and
                  Jennifer Neville},
  title        = {Scalable and exact sampling method for probabilistic generative graph
                  models},
  journal      = {Data Min. Knowl. Discov.},
  volume       = {32},
  number       = {6},
  pages        = {1561--1596},
  year         = {2018},
  doi          = {10.1007/S10618-018-0566-X},
  bibsource    = {dblp computer science bibliography, https://dblp.org}
}

@inproceedings{hubschle-Schneider2019parallel,
  author       = {Lorenz H{\"{u}}bschle{-}Schneider and
                  Peter Sanders},
  editor       = {Michael A. Bender and
                  Ola Svensson and
                  Grzegorz Herman},
  title        = {Parallel Weighted Random Sampling},
  booktitle    = {27th Annual European Symposium on Algorithms, {ESA} 2019, Munich/Garching,
                  Germany, September 9-11, 2019},
  series       = {LIPIcs},
  pages        = {59:1--59:24},
  publisher    = {Schloss Dagstuhl - Leibniz-Zentrum f{\"{u}}r Informatik},
  year         = {2019},
  doi          = {10.4230/LIPICS.ESA.2019.59},
}

@inproceedings{blasius2019efficiently,
  author =	{Bl\"{a}sius, Thomas and Friedrich, Tobias and Katzmann, Maximilian and Meyer, Ulrich and Penschuck, Manuel and Weyand, Christopher},
  title =	{{Efficiently Generating Geometric Inhomogeneous and Hyperbolic Random Graphs}},
  booktitle =	{27th Annual European Symposium on Algorithms (ESA 2019)},
  pages =	{21:1--21:14},
  series =	{Leibniz International Proceedings in Informatics (LIPIcs)},
  ISBN =	{978-3-95977-124-5},
  ISSN =	{1868-8969},
  year =	{2019},
  volume =	{144},
  editor =	{Bender, Michael A. and Svensson, Ola and Herman, Grzegorz},
  publisher =	{Schloss Dagstuhl -- Leibniz-Zentrum f{\"u}r Informatik},
  address =	{Dagstuhl, Germany},
  doi =		{10.4230/LIPIcs.ESA.2019.21},
}

@inproceedings{clementi2013rumor,
author="Clementi, Andrea
and Crescenzi, Pierluigi
and Doerr, Carola
and Fraigniaud, Pierre
and Isopi, Marco
and Panconesi, Alessandro
and Pasquale, Francesco
and Silvestri, Riccardo",
editor="Bodlaender, Hans L.
and Italiano, Giuseppe F.",
title="Rumor Spreading in Random Evolving Graphs",
booktitle="Algorithms -- ESA 2013",
year="2013",
publisher="Springer Berlin Heidelberg",
address="Berlin, Heidelberg",
pages="325--336",
doi="10.1007/978-3-642-40450-4_28"
}

@inproceedings{durak2013scalable,
  author={Durak, Nurcan and Kolda, Tamara G. and Pinar, Ali and Seshadhri, C.},
  booktitle={2013 IEEE 2nd Network Science Workshop (NSW)}, 
  title={A scalable null model for directed graphs matching all degree distributions: In, out, and reciprocal}, 
  year={2013},
  volume={},
  number={},
  pages={23-30},
  doi={10.1109/NSW.2013.6609190}}

@article{yan2016asymptotics,
  title={Asymptotics in directed exponential random graph models with an increasing bi-degree sequence},
  author={Yan, Ting and Leng, Chenlei and Zhu, Ji},
  journal={The Annals of Statistics},
  pages={31--57},
  year={2016},
  publisher={JSTOR},
  doi={10.1214/15-AOS1343}
}

@article{bianchi2025mixing,
author = {Bianchi, Alessandra and Passuello, Giacomo},
title = {Mixing Cutoff for Simple Random Walks on the Chung–Lu Digraph},
journal = {Random Structures \& Algorithms},
volume = {66},
number = {1},
pages = {e21277},
year = {2025},
doi = {https://doi.org/10.1002/rsa.21277},
}

@article{saracco2025entropy,
  title={Entropy-based models to randomise real-world hypergraphs},
  author={Saracco, Fabio and Petri, Giovanni and Lambiotte, Renaud and Squartini, Tiziano},
  journal={Communications Physics},
  volume={8},
  number={1},
  pages={284},
  year={2025},
  publisher={Nature Publishing Group UK London},
  doi={10.1038/s42005-025-02182-2}
}

@article{kaminski2019clustering,
  title={Clustering via hypergraph modularity},
  author={Kami{\'n}ski, Bogumi{\l} and Poulin, Val{\'e}rie and Pra{\l}at, Pawe{\l} and Szufel, Przemys{\l}aw and Th{\'e}berge, Fran{\c{c}}ois},
  journal={PloS one},
  volume={14},
  number={11},
  pages={e0224307},
  year={2019},
  publisher={Public Library of Science San Francisco, CA USA},
  doi={10.1371/journal.pone.0224307}
}

\newpage
\appendix
\section*{APPENDIX}
\section{Basic properties of NR model}
\label{app:NR_edge_degree}

We collect here all the technical results that lead to the proof of Proposition \ref{prop:NR_edge_degree}.

\begin{lemma}[Multigraph edge count]
The total number of edges in $\widetilde G_n(x)$ satisfies
\begin{equation}
    \widetilde m \sim \Poi(L_n/2).
\end{equation}
\end{lemma}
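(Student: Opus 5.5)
The plan is to write $\widetilde m$ as a sum of independent Poisson variables and use the fact that Poisson laws are closed under independent summation. By Definition~\ref{def:nr-model}, the multigraph edge set is $\widetilde E=\bigcup_{i\le j}\widetilde E_{ij}$. Counting with multiplicity and including loops, this gives
\[
\widetilde m=\sum_{i<j}\widetilde E_{ij}+\sum_{i=1}^n \widetilde E_{ii}.
\]
Each self-loop counts as one edge here (it contributes $2$ only to the degree $\widetilde D_i$, not to $\widetilde m$). The family $\{\widetilde E_{ij}\}_{i\le j}$ is independent by definition, and each term is Poisson. The additivity of independent Poisson variables, proved for instance via generating functions $\E[s^{\Poi(\lambda)}]=e^{\lambda(s-1)}$ and independence, shows that $\widetilde m\sim\Poi(\Lambda)$ with
\[
\Lambda=\sum_{i<j}\frac{x_ix_j}{L_n}+\sum_{i=1}^n\frac{x_i^2}{2L_n}.
\]

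It remains to compute $\Lambda$. By symmetry, $\sum_{i<j}x_ix_j=\tfrac12\bigl(\sum_{i\neq j}x_ix_j\bigr)=\tfrac12\bigl(L_n^2-\sum_i x_i^2\bigr)$. Hence
\[
\Lambda=\frac{L_n^2-\sum_i x_i^2}{2L_n}+\frac{\sum_i x_i^2}{2L_n}=\frac{L_n}{2}.
\]
The loop contributions exactly cancel the diagonal correction, and this is precisely why the loop rate in Definition~\ref{def:nr-model} carries the factor $1/2$.

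As a cross-check, the same conclusion follows from the event-driven representation proved in Theorem~\ref{thm:exact-multigraph-law'}s argument: the number of events is $K\sim\Poi(L_n/2)$, and every event produces exactly one multigraph edge, so $\widetilde m=K$. I would present the direct computation as the main proof, since it relies only on the model definition.

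There is no real obstacle. The only point requiring care is the bookkeeping convention that a loop counts once in $\widetilde m$; this must be stated explicitly so the half-rate for loops yields exactly $L_n/2$. A degenerate case is $L_n=0$, where the rates are undefined; there all $x_i=0$, and I would treat $\widetilde m\equiv 0=\Poi(0)$ by convention.
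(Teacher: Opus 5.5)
Your proof is correct and follows the paper's argument: write $\widetilde m$ as the sum of the independent Poisson multiplicities $\widetilde E_{ij}$, $i\le j$, then use $\sum_{i<j}x_ix_j=\frac12\bigl(L_n^2-\sum_i x_i^2\bigr)$ so that the loop rates $x_i^2/(2L_n)$ exactly compensate, giving total rate $L_n/2$. Your extra cross-check via the event-driven representation, where $\widetilde m=K$, is not circular, since that theorem's proof does not use this lemma; your remark on the degenerate case $L_n=0$ is a harmless addition not made in the paper.
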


\begin{proof}
The claim follows from the additivity of independent Poisson variables, which yields $\widetilde m \sim \Poi\left(\sum_{i<j}\frac{x_i x_j}{L_n}+\sum_{i=1}^n \frac{x_i^2}{2L_n}\right)$. Moreover,
\[ \sum_{i<j}x_i x_j = \frac{1}{2}\left(L_n^2-\sum_{i=1}^n x_i^2\right), \]
hence
\[\sum_{i<j}\frac{x_i x_j}{L_n}+\sum_{i=1}^n \frac{x_i^2}{2L_n}=\frac{1}{2L_n}\left(L_n^2-\sum_{i=1}^n x_i^2\right)+\frac{1}{2L_n}\sum_{i=1}^n x_i^2=\frac{L_n}{2}.\]
\end{proof}

\begin{lemma}[Multigraph expected degrees]
\label{lem:nr-multidegrees}
For every $i\in V$, the degree of vertex $i$ in the NR multigraph satisfies
\begin{equation}
    \E[\widetilde D_i] = x_i.
\end{equation}
\end{lemma}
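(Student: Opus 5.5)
The plan is a direct computation from the definition $\widetilde D_i = \sum_{j\neq i}\widetilde E_{ij} + 2\widetilde E_{ii}$, using linearity of expectation and the Poisson means fixed in Definition~\ref{def:nr-model}. No independence is needed, only the marginal laws of the multiplicities. Each Poisson variable has expectation equal to its parameter, so
\[
\E[\widetilde E_{ij}] = \frac{x_i x_j}{L_n} \quad (j\neq i), \qquad \E[\widetilde E_{ii}] = \frac{x_i^2}{2L_n}.
\]

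Summing these, I would write
\[
\E[\widetilde D_i] = \sum_{j\neq i}\frac{x_i x_j}{L_n} + 2\cdot\frac{x_i^2}{2L_n} = \frac{x_i}{L_n}\Bigl(\sum_{j\neq i} x_j + x_i\Bigr) = \frac{x_i}{L_n}\, L_n = x_i .
\]
The key point is that the factor $2$ counting each self-loop twice in the degree exactly cancels the factor $\tfrac12$ in the loop rate. This cancellation restores the missing diagonal term $x_i\cdot x_i/L_n$, so the sum runs over all $j\in V$ and collapses to $L_n$.

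The only step requiring care, rather than a real obstacle, is the bookkeeping convention. The multiplicity $\widetilde E_{ij}$ for the unordered pair is indexed with $i\le j$, so $\widetilde E_{ji}$ for $j<i$ must be read as $\widetilde E_{ij}$. With that convention each neighbour $j\neq i$ is counted exactly once in $\sum_{j\neq i}$.

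One should also note the degenerate case $L_n = 0$. There all $x_i = 0$, every rate vanishes, and the identity holds trivially with both sides equal to $0$, provided the rates are interpreted as $0$. I would mention this in a single sentence.
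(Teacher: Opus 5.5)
Your proof is correct and is essentially the paper's argument: the paper writes $\widetilde D_i$ as $\Poi\bigl(\sum_{j\neq i} x_i x_j/L_n\bigr) + 2\cdot\Poi\bigl(x_i^2/(2L_n)\bigr)$ via additivity of independent Poissons and reads off the mean, which is the same computation you carry out by linearity of expectation. You are right that independence is not needed for the mean, so your phrasing is slightly cleaner than the paper's.
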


\begin{proof}
By definition,
\[\widetilde D_i=\sum_{j\neq i}\widetilde E_{ij}+2\widetilde E_{ii}.\]
Then, from the additivity of independent Poisson variables, the total incidence count at vertex $i$ is the sum of two Poisson random variables
\[
\widetilde D_i \sim \Poi\left(\sum_{j \neq i} \frac{x_i x_j}{L_n}\right) + 2 \cdot \Poi\left(\frac{x_i^2}{2L_n}\right)
\]
Hence $\mathbb{E}[\widetilde D_i] = x_i$.
\end{proof}

For the sake of notation, we define $\xi_{ij}$ independent Bernoulli random variables with parameter $1- \exp\left(-\frac{x_i x_j}{L_n}\right)$ for each $i,j \in V$.

\begin{lemma}[Simple graph edge count]
The total number of edges in $G_n$ satisfies
\begin{equation}
    m \sim \sum_{i<j}\xi_{ij}.
\end{equation}
In particular, $m$ is Poisson-binomial with mean $\mathbb E[m] = \sum_{i < j}\left(1-\exp\left(-\frac{x_i x_j}{L_n}\right)\right)$. 
\end{lemma}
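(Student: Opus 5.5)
The plan is to read off the law of $m$ directly from Definition~\ref{def:nr-model}. By construction, the simple graph $G_n(x)$ has edge set $E=\{\{i,j\}: i<j,\ \widetilde E_{ij}\ge 1\}$, so
\[
m=\sum_{i<j} 1_{\{\widetilde E_{ij}\ge 1\}} .
\]
This identity holds pathwise. Loops $\widetilde E_{ii}$ do not enter it, since they are deleted by the projection.

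The first step is to identify the law of each summand. Since $\widetilde E_{ij}\sim\Poi(x_ix_j/L_n)$, we have
\[
\Prob(\widetilde E_{ij}\ge 1)=1-\Prob(\widetilde E_{ij}=0)=1-\exp\!\left(-\frac{x_ix_j}{L_n}\right).
\]
Hence $1_{\{\widetilde E_{ij}\ge1\}}$ is Bernoulli with exactly the parameter of $\xi_{ij}$.

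The second step is joint independence of these indicators. The variables $\{\widetilde E_{ij}\}_{i<j}$ are independent by Definition~\ref{def:nr-model}. Each indicator is a measurable function of a single $\widetilde E_{ij}$, so the family $\{1_{\{\widetilde E_{ij}\ge1\}}\}_{i<j}$ is independent. It therefore has the same joint law as $(\xi_{ij})_{i<j}$, and applying the summation map gives $m\stackrel{d}{=}\sum_{i<j}\xi_{ij}$. A sum of independent, non-identically distributed Bernoulli variables is Poisson-binomial by definition. Linearity of expectation then gives
\[
\E[m]=\sum_{i<j}\left(1-\exp\!\left(-\frac{x_ix_j}{L_n}\right)\right).
\]

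I do not expect a genuine obstacle here. The only point needing care is the passage from ``each indicator has the right marginal'' to ``the sum has the right law'': marginals alone are not enough, so the argument must explicitly use independence of the $\widetilde E_{ij}$ to match the joint law with that of the $\xi_{ij}$ before summing.
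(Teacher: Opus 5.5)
Your proposal is correct and follows essentially the same route as the paper: you write $m=\sum_{i<j}1_{\{\widetilde E_{ij}\ge 1\}}$, identify each indicator as Bernoulli with parameter $1-\exp(-x_ix_j/L_n)$, and use independence of the $\widetilde E_{ij}$ to conclude. Your explicit remark that independence is needed to match the joint law, and not just the marginals, before summing is a slightly more careful version of the paper's one-line independence step.
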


\begin{proof}
By construction, the simple NR projection contains the edge $\{i,j\}$ if and only if $\widetilde E_{ij} \ge 1$. Therefore, for each $j\neq i$,
\[ 1_{\{\{i,j\}\in E\}} = 1_{\{\widetilde E_{ij}\ge 1\}} \sim \text{Bernoulli}\left(1-\exp\left(-\frac{x_i x_j}{L_n}\right)\right).\]
Since edge occurrences are independent, the indicators $\xi_{ij}$,
are also independent for each $i,j$, and
\[ m=\sum_{i < j} 1_{\{\{i,j\}\in E\}} \sim \sum_{i < j}\xi_{ij}. \]
\end{proof}

\begin{lemma}[Simple graph degrees]
\label{lem:nr-simple-degrees}
For every $i\in V$, the degree of vertex $i$ in the NR simple graph satisfies
\begin{equation}
    D_i \sim \sum_{j\neq i} \xi_{ij}.
\end{equation} 
In particular, $D_i$ is Poisson-binomial with mean $\mathbb E[D_i]=\sum_{j\neq i}\left(1-\exp\!\left(-\frac{x_i x_j}{L_n}\right)\right)$. Moreover, if $x$ satisfies the regularity conditions, then $\mathbb E[D_i]=x_i(1+o(1))$.
\end{lemma}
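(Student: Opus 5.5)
The first two claims follow directly from the model. By Definition~\ref{def:nr-model}, for each $j\neq i$ the edge $\{i,j\}$ belongs to $G_n(x)$ exactly when $\widetilde E_{ij}\ge 1$. Self-loops are removed, so they do not contribute to $D_i$. Therefore $D_i=\sum_{j\neq i}1_{\{\widetilde E_{ij}\ge 1\}}$. The variables $\widetilde E_{ij}$ are independent Poisson, so these indicators are independent Bernoulli with parameter $1-\exp(-x_ix_j/L_n)$, which is the law of $\xi_{ij}$. This gives $D_i\sim\sum_{j\neq i}\xi_{ij}$, a Poisson-binomial variable, and linearity of expectation gives the stated formula for $\E[D_i]$.

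For the asymptotic claim I would sandwich $\E[D_i]$ using the elementary inequalities
\[
t-\tfrac{t^2}{2}\le 1-e^{-t}\le t, \qquad t\ge 0,
\]
applied with $t_{ij}=x_ix_j/L_n$.

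The upper bound gives
\[
\E[D_i]\le \sum_{j\neq i}\frac{x_ix_j}{L_n}=x_i\Bigl(1-\frac{x_i}{L_n}\Bigr)\le x_i .
\]

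The lower bound gives
\[
\E[D_i]\ge x_i-\frac{x_i^2}{L_n}-\frac{x_i^2}{2L_n^2}\sum_{j}x_j^2 .
\]

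Both error terms are controlled by hub-control. Write $\varepsilon_n:=\max_k x_k/L_n^{1/2}\to 0$. The first term satisfies $x_i^2/L_n = x_i\cdot x_i/L_n\le x_i\,\varepsilon_n L_n^{-1/2}$. For the second, $\sum_j x_j^2\le \max_k x_k\cdot L_n$, so it is at most
\[
\frac{x_i}{2}\cdot\frac{x_i\max_k x_k}{L_n}\le \frac{x_i}{2}\,\varepsilon_n^2 .
\]
Non-degeneracy ($L_n\to\infty$) makes $L_n^{-1/2}\to0$ as well. Altogether
\[
x_i\bigl(1-\varepsilon_nL_n^{-1/2}-\varepsilon_n^2/2\bigr)\le\E[D_i]\le x_i,
\]
and the correction is $o(1)$ uniformly in $i$. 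This proves $\E[D_i]=x_i(1+o(1))$.

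The hard step is bounding the quadratic remainder uniformly. The key move is the bound $\sum_j x_j^2\le \max_k x_k\cdot L_n$: together with hub-control, it reduces the remainder to a multiple of $\varepsilon_n^2$. If $x_i=0$, both sides are $0$ and the claim holds trivially.
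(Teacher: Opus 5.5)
Your proposal is correct and follows essentially the same route as the paper: you write $D_i$ as a sum of the independent indicators $1_{\{\widetilde E_{ij}\ge 1\}}$, then use $0\le t-(1-e^{-t})\le t^2/2$, the bound $\sum_j x_j^2\le \max_k x_k\, L_n$, and hub-control to make the correction $x_i^2/L_n+\frac{x_i^2}{2L_n^2}\sum_j x_j^2$ negligible relative to $x_i$. Your explicit two-sided bound $x_i(1-\varepsilon_nL_n^{-1/2}-\varepsilon_n^2/2)\le\E[D_i]\le x_i$ is slightly sharper than the paper's write-up: it shows $x_i^2/L_n=o(x_i)$ uniformly in $i$, whereas the paper only records $x_i^2/L_n=o(1)$.
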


\begin{proof}
By construction, the simple NR projection contains the edge $\{i,j\}$ if and only if $\widetilde E_{ij} \ge 1$. Therefore, for each $j\neq i$,
\[ 1_{\{\{i,j\}\in E\}} = 1_{\{\widetilde E_{ij}\ge 1\}} \sim \text{Bernoulli}\left(1-\exp\left(-\frac{x_i x_j}{L_n}\right)\right).\]
Since the variables $(\widetilde E_{ij})_{j\neq i}$ are independent, the indicators $\xi_{ij}:=1_{\{\widetilde E_{ij}\ge 1\}}$,
are also independent for each $i,j$, and
\[ D_i=\sum_{j\neq i} 1_{\{\{i,j\}\in E\}}\sim \sum_{j\neq i}\xi_{ij}. \]

Next, recall that $0 \leq t - (1-e^{-t}) \leq \frac{t^2}{2}$ whenever $t \geq 0$; thus,
\[ 
\mathbb E[D_i] = \sum_{j\neq i}\frac{x_i x_j}{L_n} + O\left(\sum_{j\neq i}\frac{x_i^2x_j^2}{L_n^2}\right) 
\quad \Longrightarrow \quad
\mathbb E[D_i]=x_i-\frac{x_i^2}{L_n}+O\!\left(\frac{x_i^2}{L_n^2}\sum_{j=1}^n x_j^2\right).
\]
Finally, under the stated assumption, $\max_{i \in V} x_i = o(\sqrt{L_n})$. Then, $x_i^2/L_n = o(1)$ and 
\[
\frac{x_i^2}{L_n^2}\sum_{j=1}^n x_j^2 \leq \frac{x_i^2 \max_{j \in V} x_j}{L_n^2} \sum_{j=1}^n x_i \leq x_i \frac{(\max_{i \in V} x_i)^2}{L_n} = o(x_i).
\]
Hence, $\mathbb E[D] = x_i(1+o(1))$.
\end{proof}

The previous lemma shows that the target sequence $x$ is matched exactly in expectation in the multigraph model and, under mild regularity conditions, also asymptotically in simple graph projection.

We next compare the edge budgets of the multigraph and the simple graph. Let
\[
\Delta_m:=\widetilde m-m.
\]
The quantity $\Delta_m$ counts all edges lost when projecting from the multigraph to the simple graph, namely all self-loops together with the excess multiplicity of parallel edges. 

\begin{lemma}[Excess edges]
\label{lem:excess-edges}
If $x \in \mathbb{R}_+^n$ satisfies the regularity conditions, then the excess edge count is asymptotically negligible compared to the number of edges in the multigraph $\widetilde G_n$, that is,
\[
\Delta_m=o_{\mathbb P}(\widetilde m).
\]
\end{lemma}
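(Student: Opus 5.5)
The plan is to bound $\E[\Delta_m]$ directly, show it is $o(L_n)$, and then combine Markov's inequality with the concentration of $\widetilde m\sim\Poi(L_n/2)$.

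First, write the excess as a sum over pairs:
\[
\Delta_m=\sum_{i=1}^n \widetilde E_{ii}+\sum_{i<j}\bigl(\widetilde E_{ij}-1_{\{\widetilde E_{ij}\ge 1\}}\bigr).
\]
For $Y\sim\Poi(\lambda)$ we have $\E[Y-1_{\{Y\ge1\}}]=\lambda-(1-e^{-\lambda})\le \lambda^2/2$, by the same elementary inequality used in Lemma~\ref{lem:nr-simple-degrees}. This gives
\[
\E[\Delta_m]\le \sum_{i}\frac{x_i^2}{2L_n}+\sum_{i<j}\frac{x_i^2x_j^2}{2L_n^2}
\le \frac{\sum_i x_i^2}{2L_n}+\frac{(\sum_i x_i^2)^2}{4L_n^2}.
\]

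Next, use hub control. Since $\sum_i x_i^2\le \max_i x_i\cdot L_n$ and $\max_i x_i=o(\sqrt{L_n})$, we get $\sum_i x_i^2/L_n\le \max_i x_i = o(\sqrt{L_n})$. Hence the first term is $o(\sqrt{L_n})$ and the second is $o(L_n)$, so overall $\E[\Delta_m]=o(L_n)$.

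Now conclude. Since $\widetilde m\sim\Poi(L_n/2)$ with $L_n\to\infty$ by non-degeneracy, Chebyshev's inequality gives $\widetilde m\ge L_n/4$ with probability tending to one. For any $\varepsilon>0$, Markov's inequality gives
\[
\Prob(\Delta_m>\varepsilon\widetilde m)\le \Prob(\widetilde m<L_n/4)+\Prob(\Delta_m>\varepsilon L_n/4)\le o(1)+\frac{4\,\E[\Delta_m]}{\varepsilon L_n}=o(1).
\]
This is exactly $\Delta_m=o_{\Prob}(\widetilde m)$.

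The main, though mild, obstacle is controlling the second-order sum $\sum_{i<j}x_i^2x_j^2/L_n^2$ using only hub control, since no second-moment condition is assumed. The bound $\sum_i x_i^2\le \max_i x_i\, L_n$ is what resolves it.
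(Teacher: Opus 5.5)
Your proposal is correct and follows the paper's proof essentially step by step. It uses the same decomposition of $\Delta_m$, the same inequality $t-(1-e^{-t})\le t^2/2$, the same hub-control bound via $\sum_i x_i^2\le \max_i x_i\, L_n$, and Markov's inequality. The only addition is that you spell out, via Chebyshev, the passage from $\Delta_m=o_{\mathbb P}(L_n)$ to $\Delta_m=o_{\mathbb P}(\widetilde m)$, which the paper just asserts follows from $\widetilde m\sim\Poi(L_n/2)$ with $L_n\to\infty$.
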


\begin{proof}
By construction, $\widetilde m$ counts all edges in the multigraph, including loops and repeated edges, while $m$ counts only one edge for each unordered pair $i<j$ with positive multiplicity and ignores loops altogether. Hence
\[
\Delta_m
=
\sum_{i=1}^n \widetilde E_{ii}
+
\sum_{i<j}\Bigl(\widetilde E_{ij}-1_{\{\widetilde E_{ij}\ge 1\}}\Bigr).
\]
Taking expectations and using
$\E[\widetilde E_{ii}]=\frac{x_i^2}{2L_n},\E[\widetilde E_{ij}]=\frac{x_i x_j}{L_n},\Prob(\widetilde E_{ij}\ge 1)=1-\exp\!\left(-\frac{x_i x_j}{L_n}\right)$, gives
\[
\E[\Delta_m] = \frac{1}{2L_n}\sum_{i=1}^n x_i^2 + \sum_{i<j} \left(\frac{x_i x_j}{L_n} - \left(1-\exp\left(-\frac{x_i x_j}{L_n}\right)\right) \right).
\]

Thus, from the classical inequality $0\le t-(1-e^{-t})\le \frac{t^2}{2}$, taking $t=x_i x_j/L_n$ yields
\[
\E[\Delta_m] \le \frac{1}{2L_n}\sum_{i=1}^n x_i^2 + \frac{1}{2L_n^2}\sum_{i<j}x_i^2x_j^2 \le \frac{1}{2L_n}\sum_{i=1}^n x_i^2 + \frac{1}{4L_n^2}\left(\sum_{i=1}^n x_i^2\right)^2.
\]
In particular, 
\[\frac{\sum_{i=1}^n x_i^2}{L_n} \leq \max_{i\in V} x_i = o(L_n), \qquad \text{and} \qquad \frac{(\sum_{i=1}^n x_i^2)^2}{L_n^2} \leq (\max_{i\in V} x_i )^2 = o(L_n).\]
Then, since $\Delta_m\ge 0$, Markov's inequality gives, for every $\varepsilon>0$,
\[
\Prob(\Delta_m>\varepsilon L_n) \le \frac{\E[\Delta_m]}{\varepsilon L_n}\to 0.
\]
Thus $\Delta_m=o_{\mathbb P}(L_n)$, and, as $L_n \to \infty$, the conclusion follows from $\widetilde m\sim\Poi(L_n/2)$.
\end{proof}

\end{document}